\newtheorem{theorem}{Theorem}
\newtheorem{corollary}{Corollary}
\newcommand{\la}{\left\langle}
\newcommand{\ra}{\right\rangle}
\begin{document}
\title[Exact construction and sampling of graphs with prescribed degree correlations]
{Exact sampling of graphs with prescribed degree correlations}

\author{Kevin E. Bassler$^{1,2,9}$, Charo I. Del~Genio$^{3,4,5,9}$, 
Péter L. Erdős$^{6}$, István Miklós$^{6,7}$, Zoltán Toroczkai$^{8,9}$}
\address{$1$ Department of Physics, 617 Science \& Research 1, 
University of Houston, Houston, Texas 77204-5005, USA}
\address{$2$ Texas Center for Superconductivity, 202 Houston 
Science Center, University of Houston, Houston, Texas 77204-5002, USA}
\address{$3$ Warwick Mathematics Institute, University of Warwick, 
Gibbet Hill Road, Coventry CV4 7AL, United Kingdom}
\address{$4$ Centre for Complexity Science, University of Warwick, 
Gibbet Hill Road, Coventry CV4 7AL, United Kingdom}
\address{$5$ Warwick Infectious Disease Epidemiology Research 
(WIDER) Centre, University of Warwick, Gibbet Hill Road, Coventry CV4 7AL, United Kingdom}
\address{$6$ MTA Alfréd Rényi Institute of Mathematics, Reáltanoda u.\ 13--15, Budapest 1053, Hungary}
\address{$7$ Institute for Computer Science and Control, 1111 Budapest 1518, L\'agym\'anyosi \'ut 11, Hungary}
\address{$8$ Department of Physics, and Interdisciplinary Center for 
Network Science and Applications (ICeNSA) 225 Nieuwland Science Hall, 
University of Notre Dame, Notre Dame, Indiana 46556, USA}
\address{$9$ Max Planck Institute for the Physics of Complex Systems, 
Nöthnitzer Str. 38, D-01187 Dresden, Germany}

\begin{abstract}
Many real-world networks exhibit correlations
between the node degrees. For instance, in social
networks nodes tend to connect to nodes of similar
degree and conversely, in biological and technological
networks, high-degree nodes tend to be linked
with low-degree nodes. Degree correlations also
affect the dynamics of processes supported by
a network structure, such as the spread of opinions
or epidemics. The proper modelling of these systems, i.e., without 
uncontrolled biases, requires the sampling of networks with a specified
set of constraints. We present a solution to
the sampling problem when the constraints imposed
are the degree correlations. In particular, we
develop an exact method to construct
and sample graphs with a specified joint-degree
matrix, which is a matrix providing the
number of edges between all the sets of nodes
of a given degree, for all degrees, thus completely 
specifying all pairwise degree correlations, and additionally, the
degree sequence itself. Our algorithm
always produces independent samples without backtracking.
The complexity of the graph construction algorithm is 
${\cal O}(NM)$ where $N$ is the number of nodes and $M$ is the number of 
edges.
\end{abstract}
\pacs{89.75.Hc, 89.65.-s, 89.75.-k}
\submitto{\NJP}
\maketitle

\section{Introduction}
Complex systems often consist of a discrete set of elements with 
heterogeneous pairwise interactions. Networks, or graphs
have proven to be a useful representational paradigm for the study of these
systems~\cite{New03,Tor04,Boc06,Boc14}. The nodes, or vertices,
of the graphs represent the discrete elements, and the edges,
or links, represent their interaction. In empirical studies
of real-world systems, however, for reasons of methodology,
privacy, or simply lack of data, frequently there is only limited information available 
about the connectivity structure of a network.
When this is the case, one has to take a statistical approach and study 
ensembles of graphs that conform to some structural constraints. 
This statistical approach enables the computation of ensemble averages of network
observables as determined solely by the constraints, i.e., by the specified structural
properties of the graphs. Ensemble modeling of this type is necessary
to determine the relationship between the given structural constraints and
the behavior of the complex system as a whole.
Calculating ensemble averages, though,
requires the ability to construct all the graphs that are consistent
with the required structural constraints, a highly non-trivial problem.

Perhaps one of the simplest examples of structural constraints that occur
in data-driven studies of real-world systems is to fix the \emph{degree}
of each node, which is the number of edges that are connected
to, or are incident on the node. For an undirected graph with $N$ nodes this
information is specified by a \emph{degree sequence} ${\mathcal D}=
\left\lbrace d_1,d_2,\cdots,d_N\right\rbrace$,
where $d_i$ is the degree of node $i$. Similarly, for a directed
graph, a \emph{bi-degree sequence} (BDS)
${\mathcal D}=\left\lbrace \left(d_1^{-},d_1^{+}\right), 
\left(d_2^{-},d_2^{+}\right), \cdots, \left(d_N^{-},d_N^{+}\right)\right\rbrace$
specifies the number of incoming and outgoing edges for each node where
 $d_i^{-}$ denotes the in-degree, and $d_i^{+}$ the out-degree,
of node $i$. The situation of most practical interest is when we demand
the graph with a given degree sequence to be a simple graph, which has the
additional constraints that there can be at most one link (in each direction, if
directed) between any two nodes, and that no link starts and ends on the same node (no self-loops).
However, not all positive integer sequences 
can serve as the sequence of the degrees of some simple graph.
If such a graph does exist, then
the sequence is said to be \emph{graphical}. Any simple graph
(just ``graph'' from here on) with the
prescribed node degrees is said to \emph{realize} the degree sequence,
and it is called a \emph{graphical} realization of the sequence.
The two main results used to test the graphicality of an undirected degree sequence are the
Erdős-Gallai theorem~\cite{Erd60} and the Havel-Hakimi theorem~\cite{Hav55,Hak62}.
For directed networks, instead, the main theorem characterizing
the graphicality of a BDS is due to Fulkerson~\cite{Ful60}. More
recently, exploiting a formulation based on recurrence relations,
new methods were introduced to implement these tests with a worst
case computational complexity that is only linear in the number
of nodes~\cite{Del10,Kir11,Kim12}. The advantage of these methods over
others with similar complexity~\cite{Hel09} is that they also allow
a straightforward algorithmic implementation.

While the above results provide complete and practical answers
to the question of the graphicality of sequences of integers,
they do not suffice to solve the problem of constructing graphs
with prescribed degrees. One of the main issues with constructing
graphs for the purpose of ensemble modeling is that, except for
networks of just a few nodes, the number of graphs realizing a
degree sequence, or other possible constraints, is generally so
large that their complete enumeration is impractical. Therefore,
one has to resort to \emph{sampling} the space of realizations
by randomly generating networks with prescribed node degrees~\cite{Del10,Kim12}.
For the case of degree-based graph sampling, the existing approaches
generally fall into two classes that can broadly be referred
to as ``rewiring'' and ``stub-matching''. Rewiring methods start from
a graph with the required degrees and use Markov chain Monte~Carlo (MCMC)
schemes to swap repeatedly the ends of pairs of edges to produce
new graphs with the same degree sequence~\cite{Tay82,Rao96,Kan99,Vig05}.
Stub-matching methods, instead, are direct construction algorithms
that build the graphs by sequentially creating the edges 
via the joining of two stubs of two nodes~\cite{New01,Bog04,Cat05,Ser05,Bri06}. A stub represents
a non-connected, ``dangling half-edge'' and a node has as many stubs as its degree.
Unfortunately, these techniques can provide biased results, or
are ill-controlled. In the case of the MCMC method the mixing time is in general unknown 
and thus one cannot know \textit{a priori} the number of swaps needed to produce two 
statistically independent samples. Proofs showing polynomial mixing of the MCMC method
have been recently developed for special degree sequences ~\cite{Coo07,Gre11,Mik13,Erd13},
and for the case of balanced realizations of joint-degree matrices~\cite{Erd15}. However,
none of these methods allows the determination of the exponent of the polynomial scaling.

Among the stub-matching methods, the most commonly used
algorithm, which is also ill controlled, is known as the configuration model.
The configuration model was proposed in~\cite{New01} as an algorithmic
equivalent of the results from Refs.~\cite{Mol95,Mol98},
themselves based on prior models~\cite{Ben78,Bol80}.
The algorithm randomly extracts two stubs from the set of all stubs 
not yet connected into edges, and connects them into an edge. 
If a multi-edge or a self-loop has just been created, the process is 
restarted from the very beginning to avoid biases. However, depending on the degree sequence,
this process can become
very inefficient with an uncontrolled running
time, just like the MCMC method.
Alternatively, one can ignore multi-edges and self-loops,
and fix them ``by hand'' at the end of the process. However,
doing so produces
significant biases even in the limit of large system size~\cite{Kle12}.
Recently, a novel family of stub-matching algorithms 
were introduced for both undirected~\cite{Del10} and directed~\cite{Kim12}
degree sequences (reproduced here in  \ref{apxA}), based on the so-called star-constrained
graphicality theorems~\cite{Kim09,Erd10}. These algorithms generate statistically
independent samples with a worst case polynomial time of $O(NM)$, where $M$ is the
total number of edges. The samples are not generated uniformly. However, their statistical
weights are computable and can be used
to obtain results in an importance sampling framework~\cite{Del10,Bli10,Kim12,Kor13}.
Note that the solution for the directed
sequences also solves the
problem for bipartite sequences because a
bipartite graph can always be represented as a directed one in
which one of the two sets of nodes has only outgoing edges, and
the other set has only incoming ones.

Graph construction and sampling becomes even more difficult
when there are structural constraints of higher order, such
as correlations amongst the node degrees. Degree correlations
can be expressed in several ways, for example with the help of the conditional
probability $P(d'|d)$ that a node of degree $d$ will have a neighbor
of degree $d'$, or more simply, by the average degree of the neighbors of
a node with degree $d$, $\bar{d'}(d) = \sum_{d'} d' P(d'|d)$
\cite{Ves01}.
The properties of $\bar{d'}(d)$ characterize the
so-called \emph{assortativity} of a graph, which is a measure
of the tendency of a node to connect to nodes of similar
degree. If $\bar{d'}(d)$ is increasing in $d$, the
graph is degree assortative, if it is decreasing the graph is degree disassortative,
and if it is constant, the graph is degree uncorrelated. Even more
coarse-grained measures of degree correlations are possible,
including the Pearson coefficient~\cite{Pea895}, the Spearman
coefficient~\cite{Yul10} and the Kendall coefficient~\cite{Ken38}.
These coefficients assume values ranging from~$-1$, for highly
disassortative graphs, to~1, for highly assortative ones. 

A more precise way to express degree correlations
is via the use of a joint-degree matrix.
The \emph{joint-degree matrix}
(JDM) of a given undirected simple graph
is a symmetric matrix whose $\left(\alpha,\beta\right)$ element
is the number of edges between nodes of degree $\alpha$ and
nodes of degree $\beta$. The dimensions of the JDM
are $\Delta\times\Delta$,
where $\Delta$ is the largest degree of a node in the graph.
The degree correlation measures
discussed above specify the correlations only statistically, but they do not 
fix the number of edges between nodes of given degrees, whereas
the joint-degree matrices do. In this sense, the relationship
between joint-degree matrices and the statistical degree correlation measures 
is similar to the relationship between degree sequences and degree distributions.

Degree correlations have generated
considerable interest, as they are known to affect many structural
and dynamical properties of graphs and the processes they support~\cite{New02,Mas02,New03_2,Eub04,Dag12,Del13,You13,Wil14}.
Nevertheless, even though their importance is well established,
it has heretofore not been possible to perform ensemble modeling
of graphs with prescribed joint-degree matrices. In this Article,
we solve this problem by developing an algorithm based on the
stub-matching method to construct and sample ensembles of graphs 
with a specified joint-degree matrix.

\section{Mathematical foundations}

\subsection{Graphicality of JDMs}

The problem of graphicality for JDMs asks whether a specified
symmetric matrix can be the JDM of a simple graph. Our starting
point is an Erdős-Gallai-like theorem that gives the requiements
for a JDM to be graphical~\cite{Pat76,Sta12,Cza13}.

Before stating the theorem, though, note that a JDM specifies
uniquely the degree sequence of the graphs that realize it~\cite{Pat76}.
Given a JDM $J$, the number of nodes with degree $\alpha$ is
\begin{equation*}
 \left|V_\alpha\right| = \frac{1}{\alpha}\left(J_{\alpha\alpha}+
 \sum_{\beta=1}^\Delta J_{\alpha\beta}\right)\:,
\end{equation*}
where $V_\alpha$ is the set of nodes, or \emph{degree class},
with degree $\alpha$. As a general rule of notation we will use lowercase Greek letters
to indicate degree values and lowercase Latin letters for node indices.
In the equation above the sum of each row $\alpha$ of $J$ is the number 
of connections involving nodes of degree $\alpha$ (i.e., all nodes in
class $V_{\alpha}$). As each node of degree $\alpha$
has exactly $\alpha$ stubs the total number
of nodes of degree $\alpha$ is given by the notal number of stubs from all nodes
in class $V_{\alpha}$ divided by $\alpha$.
Moreover, each edge between nodes \emph{of the same degree} involves~2
stubs. Thus, the diagonal elements must be double-counted. Note that
multiple JDMs can specify the same degree sequence and thus
prescribing a JDM is more constraining than only prescribing
a degree sequence.
With the definitions above, the necessary and sufficient conditions 
for a JDM to be graphical can be stated as follows~\cite{Pat76,Sta12,Cza13}:
\begin{theorem}[JDM graphicality] \label{thm1}
A symmetric $\Delta\times\Delta$ matrix $J$ with non-negative integer elements is a graphical JDM if and only if:
\begin{eqnarray*}
1) \quad \left|V_\alpha\right|\mathrm{\ is\ an\ integer\ }&\forall 1\leqslant\alpha\leqslant\Delta,\\
2)\quad J_{\alpha\alpha}\leqslant{{\left|V_\alpha\right|} \choose 2}&\forall 1
\leqslant\alpha\leqslant\Delta, \; \mbox{and}\\
3) \quad J_{\alpha\beta}\leqslant\left|V_\alpha\right|\left|V_\beta\right|&\forall 1
\leqslant\alpha,\beta\leqslant\Delta\mathrm{\ and\ }\alpha\neq\beta\:.
\end{eqnarray*}
\end{theorem}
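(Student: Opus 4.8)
The plan is to prove the two implications separately; necessity is immediate, and sufficiency is an explicit block construction. First I would dispose of necessity. Suppose a simple graph $G$ realizes $J$. Then $V_\alpha$ is precisely the set of degree-$\alpha$ vertices of $G$, so $|V_\alpha|$ is an integer, which is~(1); the $J_{\alpha\alpha}$ edges with both endpoints in $V_\alpha$ span a simple graph on $|V_\alpha|$ vertices, so their number is at most ${{|V_\alpha|}\choose 2}$, which is~(2); and the $J_{\alpha\beta}$ edges joining the disjoint sets $V_\alpha$ and $V_\beta$ number at most $|V_\alpha|\,|V_\beta|$, which is~(3).

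For sufficiency, assume (1)--(3); write $n_\alpha=|V_\alpha|$ and fix pairwise disjoint vertex sets $V_1,\dots,V_\Delta$ with $|V_\alpha|=n_\alpha$ (empty classes are ignored). I would assemble a realization from blocks: for each pair $\alpha<\beta$ an inter-class bipartite simple graph $B_{\alpha\beta}$ on $V_\alpha\cup V_\beta$ with exactly $J_{\alpha\beta}$ edges, and for each $\alpha$ an intra-class simple graph $G_\alpha$ on $V_\alpha$ with exactly $J_{\alpha\alpha}$ edges, the realization being their edge-disjoint union. For the union to have the prescribed degrees, the blocks must agree on how many neighbours each vertex sends into each class, so the first step is to choose non-negative integers $d_v(\beta)$ for $v\in V_\alpha$ and $1\leqslant\beta\leqslant\Delta$ --- the intended number of neighbours of $v$ inside $V_\beta$ --- with
\begin{equation*}
\sum_{\beta=1}^{\Delta} d_v(\beta)=\alpha \quad (v\in V_\alpha), \qquad \sum_{v\in V_\alpha} d_v(\beta)=J_{\alpha\beta}\ \ (\beta\neq\alpha), \qquad \sum_{v\in V_\alpha} d_v(\alpha)=2J_{\alpha\alpha},
\end{equation*}
together with $d_v(\beta)\leqslant n_\beta$ for $\beta\neq\alpha$ and $d_v(\alpha)\leqslant n_\alpha-1$. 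The uniform choice $d_v(\beta)=J_{\alpha\beta}/n_\alpha$, $d_v(\alpha)=2J_{\alpha\alpha}/n_\alpha$ is a \emph{fractional} feasible point: the row sums equal $(2J_{\alpha\alpha}+\sum_{\beta\neq\alpha}J_{\alpha\beta})/n_\alpha=\alpha$ by the formula for $n_\alpha$, while the two box constraints are exactly conditions~(3) and~(2) divided by $n_\alpha$. Since the constraint matrix is that of a transportation problem (the incidence matrix of a bipartite graph), hence totally unimodular, and all the data are integers, the polytope it cuts out is integral, so there is an \emph{integral} feasible point; moreover it can be kept inside the rounding box $\lfloor c_\beta\rfloor\leqslant d_v(\beta)\leqslant\lceil c_\beta\rceil$ about the uniform values $c_\beta$ (which, again by~(2) and~(3), lies within the cardinality bounds), i.e.\ with the per-class degrees as balanced as possible.

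Given such balanced integers, I would finish with two classical results. For $\alpha<\beta$ the sequences $\{d_v(\beta)\}_{v\in V_\alpha}$ and $\{d_w(\alpha)\}_{w\in V_\beta}$ have the common sum $J_{\alpha\beta}$, are near-regular (entries differing by at most one), and have entries bounded by $n_\beta$ and $n_\alpha$ respectively; by the Gale--Ryser theorem, whose hypothesis is straightforward to verify for near-regular sequences, they are the degree sequences of the two sides of a bipartite simple graph $B_{\alpha\beta}$. For each $\alpha$ the sequence $\{d_v(\alpha)\}_{v\in V_\alpha}$ is near-regular, has even sum $2J_{\alpha\alpha}$ and maximum at most $n_\alpha-1$, hence is graphical by the Erdős--Gallai theorem; let $G_\alpha$ realize it on $V_\alpha$. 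The edge-disjoint union $G$ of all the $B_{\alpha\beta}$ and all the $G_\alpha$ is then a simple graph --- an edge is tagged as inter- or intra-class, and with which classes, by its two endpoints, so no edge and no loop is produced twice --- in which every $v\in V_\alpha$ has degree $\sum_\beta d_v(\beta)=\alpha$, the number of edges between $V_\alpha$ and $V_\beta$ is $J_{\alpha\beta}$, and the number of edges inside $V_\alpha$ is $\frac12\sum_{v\in V_\alpha} d_v(\alpha)=J_{\alpha\alpha}$; that is, $G$ realizes $J$.

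The step I expect to be the real obstacle is the first one: exhibiting a \emph{single} integral table $d_v(\beta)$ that is simultaneously column-exact (the prescribed inter-class edge counts), row-exact (each vertex attains total degree $\alpha$), and column-balanced, so that the Gale--Ryser and Erdős--Gallai hypotheses can be checked block by block. The balancing is genuinely needed: unbalanced sequences of equal sum, such as $(2,0)$ against $(2,0)$, need not be bipartite-graphical even when every entry respects the opposite side's cardinality. Existence of \emph{some} integral solution is routine; the balanced one relies on the observation that intersecting the row/column equalities with the integer box $[\lfloor c_\beta\rfloor,\lceil c_\beta\rceil]$ still leaves a nonempty integral polytope, which follows from total unimodularity together with the uniform fractional point lying in that box. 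An alternative attack, closer in spirit to the sampling algorithm developed below, would start from any simple graph with the degree sequence $\mathcal{D}$ determined by $J$ (whose graphicality would have to be established first) and repeatedly apply degree-preserving edge swaps that transfer one edge from an over-filled class pair to an under-filled one until the joint-degree matrix is matched; this trades the distribution argument for a connectivity argument on the swap moves.
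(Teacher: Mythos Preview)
Your argument is correct. The paper itself does not prove Theorem~\ref{thm1}; it is quoted from the references~\cite{Pat76,Sta12,Cza13}. That said, the sufficiency direction is implicitly re-derived later in the paper as the special case $\mathcal{S}=\emptyset$ of Theorem~\ref{decomp}, and the route taken there is genuinely different from yours. You first manufacture a balanced degree-spectra table by a total-unimodularity/transportation argument, and then appeal to Gale--Ryser and Erd\H{o}s--Gallai block by block. The paper instead observes that conditions~(2) and~(3) are exactly what is needed for \emph{some} simple (bi)partite graph $G_{\alpha\beta}$ with $J_{\alpha\beta}$ edges to exist on the prescribed vertex sets, takes the union of arbitrary such blocks, and then repairs the global degrees by repeatedly moving one edge from an over-degree vertex to an under-degree vertex in the same class (the swap in the proof of Theorem~\ref{decomp}). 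Your approach buys an explicit, one-shot construction of balanced spectra---essentially the content of Theorem~\ref{gratri} for the case of no pre-fixed entries---and packages the graphicality checks into classical theorems; the paper's approach avoids the polyhedral machinery and the block-by-block Gale--Ryser/Erd\H{o}s--Gallai verifications entirely, at the cost of a less constructive existence statement. Both are short; the swap argument is perhaps the more elementary of the two and is closer to the second ``alternative attack'' you sketch at the end, except that it swaps within the block decomposition rather than within a global degree-sequence realization.
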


It is important to observe 
that any graphical realization of a JDM can be decomposed into the disjoint union of a set of 
subgraphs $G_{\alpha\beta}$ that are bipartite ($\alpha \neq \beta$) with node sets $V_{\alpha}$ 
and $V_{\beta}$ and $J_{\alpha\beta}$ edges between them or unipartite ($\alpha = \beta$)
with node set $V_{\alpha}$ and $J_{\alpha\alpha}$ edges within that set. 
We are going to call such representation of a graphical realization a degree class representation. 
\begin{figure}
\centering
\includegraphics[width=1.0\textwidth]{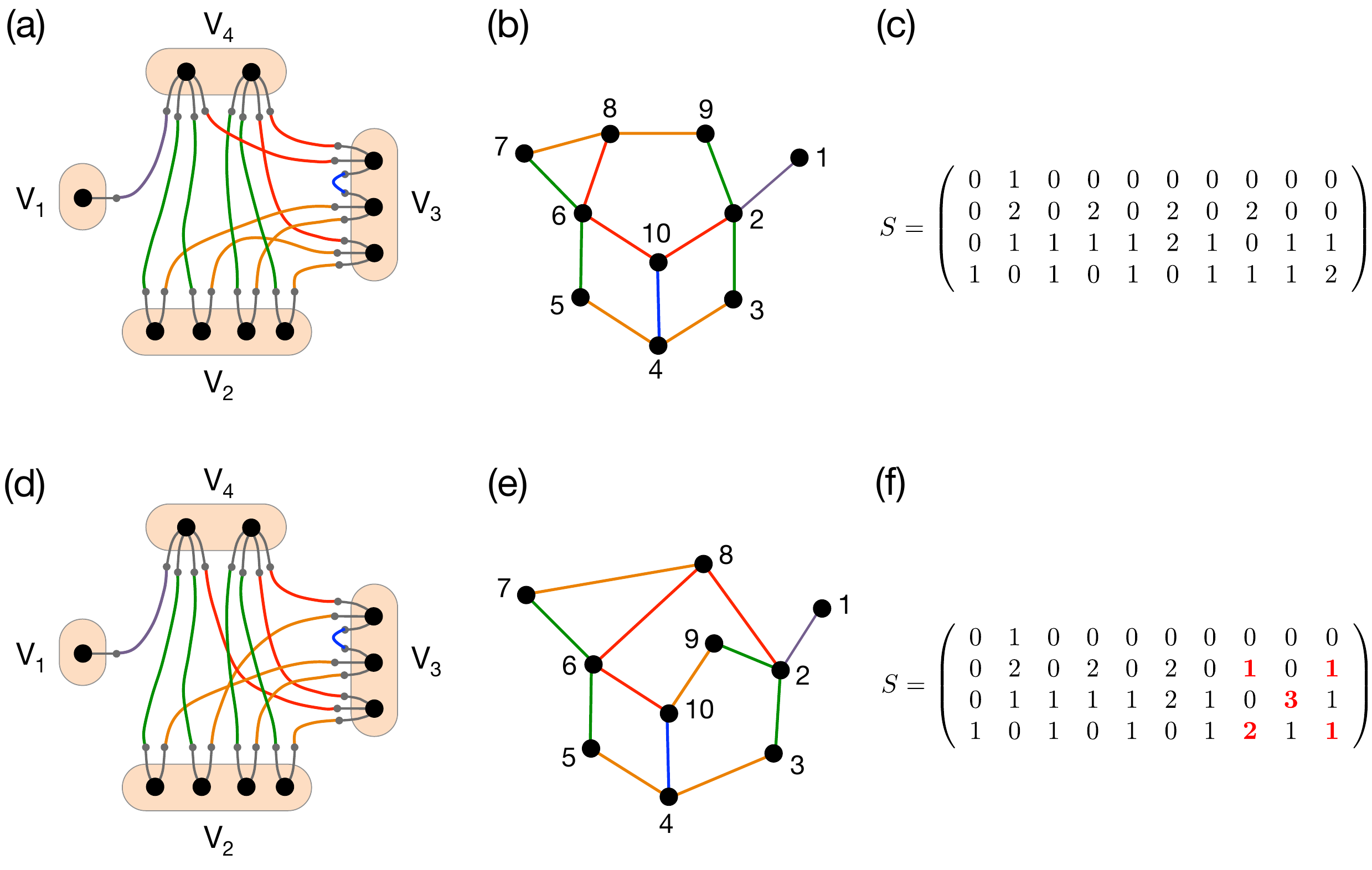}
\caption{\label{fig:ex1} Graphical realizations of a simple JDM, given in (\ref{ex1}).
Panels (a) and (d) are degree class representations, while panels (b) and (e) are regular
representations. The color of the edges indicates the subgraph $G_{\alpha\beta}$ they
belong to. Panels (c) and (f) show the corresponding degree-spectra matrices for the two
realizations; they differ in the bold red entries.}
\end{figure}
A simple example of a graphical JDM with $N=10$ and $\Delta = 4$ is given by the matrix:
\begin{eqnarray}
J = 
\left(\begin{array}{cccc}
0 & 0 & 0 & 1 \\
0 & 0 & 4 & 4 \\
0 & 4 & 1 & 3 \\
1 & 4 & 3 & 0
\end{array}\right)\;. \label{ex1}
\end{eqnarray}
Panels~(a) and~(b) of Fig.~\ref{fig:ex1} show a graphical realization of $J$ in degree class representation
and regular representation, respectively. Panels~(d) and~(e) of the same figure show another realization
of $J$ in the two representations. The color of the edges indicate the subgraph they belong to.
For example, $G_{24}$ is a bipartite graph between nodes of degree 2 ($V_2$)  and 4 ($V_4$),
respectively,  having $J_{2,4} = 4$ edges drawn in green color, whereas $G_{33}$ is unipartite
with a single $J_{33}=1$ edge drawn in blue. Note that while both graphical realizations have the
same JDM, they are very different graphs. To see this, consider the counts $n_{\ell}$
of cycles $C_{\ell}$ of length $\ell$ (a cycle is a closed path without repeated nodes). The graph in
Fig.~\ref{fig:ex1}(b) has $n_3 = 1$, $n_4 = 2$, $n_5 = 1$, $n_6 = 2$, $n_7 = 3$ and $n_8 = 3$,
whereas the one in Fig.~\ref{fig:ex1}(e) has  $n_3 = 1$, $n_4 = 1$, $n_5 = 2$, $n_6 = 3$, $n_7 = 4$
and $n_8 = 1$.

Theorem~\ref{thm1} is an existence theorem, just like the Erd\H{o}s-Gallai
theorem for the case of degree sequences, and as such it does not provide an 
algorithm that can generate simple graphs with a given JDM. More importantly, we 
also need an algorithm that does not exclude classes of graphical realizations of a given 
JDM, but that can construct in principle any such realization.
The situation is similar to that of degree sequences. In that case the Havel-Hakimi
method~\cite{Hav55,Hak62} is always able to create a graphical realization of a graphical
degree sequence, but cannot construct them all, i.e., there will be some realizations
that can never be built by this algorithm. This was the reason for the introduction of the
notion of star-constrained graphicality in Refs.~\cite{Kim09,Erd10} and the subsequent construction
algorithms in Refs.~\cite{Del10,Kim12}. Here as well, we want to have a direct construction algorithm
and ultimately an exact sampler that does not exclude any realization of a JDM. 
Due to the different nature of the constraints from the degree-sequence-based case, we need to develop a novel approach.

The idea of the approach is based on the degree class representation above. Since the edges 
of the subgraphs $G_{\alpha\beta}$ are disjoint, we could build a graphical realization $G$ of the JDM 
$J$ by building all these subgraphs, while respecting the constraints. For a $G_{\alpha\beta}$ subgraph
we know its node set(s) and its total number of edges $J_{\alpha\beta}$. Consider then a node
$v \in V_{\alpha}$. We are not given its degree in $G_{\alpha\beta}$ for any $\beta$, but we know that the sum of
its degrees within every one of these subgraphs must add up to $\alpha$. For example, the sum of the 
numbers of the purple, green and red edges coming out of node 2 in Fig.~\ref{fig:ex1}(b) must add to 4.
In addition, we also have the constraints that the sum of the degrees of one color of all nodes within 
$V_{\alpha}$ must equal to the corresponding given JDM entry. Indeed, for example, the sum of
all green edges in Fig.~\ref{fig:ex1}(a) or Fig.~\ref{fig:ex1}(b) is $J_{2,4} = 4$,
for orange is 4, red is 3, etc. Thus, the idea of the algorithm is to first determine
the degree of a given color respecting the constraints for all nodes and all
colors, then use our methods introduced earlier \cite{Del10,Kim12} (see \ref{apxA}) to build
the $G_{\alpha\beta}$ subgraphs based on the corresponding degree sequences of
their nodes. Different graphical realizations will be obtained from different
assignments of color degrees and, of course, from the different graphical realizations
of the same set of degrees. Note that for the bipartite subgraphs $G_{\alpha\beta}$ we are
specifying degree sequences for nodes in both partitions $V_{\alpha}$ and $V_{\beta}$
and thus we can use our graph construction method for directed graphs \cite{Kim12},
because a bipartite graph can be represented as a directed graph if nodes in one partition
have only outgoing edges and  in the other only incoming edges. In the
following it will be useful to introduce the notion of degree spectra,
representing the degrees of different colors of a node, as described above.

\subsection{Degree spectra}\label{degspe}

Consider a single row $\alpha$ of a graphical
JDM $J$. The information contained in the row
determines the precise number of edges needed
between nodes of degree $\alpha$ and nodes of
every degree. In other words, of all the stubs
coming from $V_{\alpha}$,
$J_{\alpha,1}$ of them must end in a node of degree~1,
$J_{\alpha,2}$ of them must end in a node of degree~2,
and so on. However, these matrix elements do
not specify how to distribute these edges within and between
the degree classes. To better specify these connections
one introduces the notion of the \emph{degree spectra}, which 
can be conveniently
represented as a matrix. The \textit{degree spectrum} of a node
is the sequence of its degrees
towards all the degree classes, including its own degree class. 
A \emph{degree-spectra matrix}
$S$ is a $\Delta\times N$ matrix whose $\left(\alpha,i\right)$
element $S_{\alpha i}$ is the number of edges between node
$i$ and degree class $\alpha$ (the set of nodes of degree $\alpha$). The $i^\mathrm{th}$
column of $S$ defines the degree spectrum
of node $i$. Panels~(c) and~(f) of Fig.~\ref{fig:ex1}show two representations of the
same JDM given in Eq.~\ref{ex1}. In general, there are many degree
spectra matrices that correspond to the same
JDM. As described in the previous
section, we employ a two-step
process in order to randomly sample graphs that
realize a given JDM. First, we generate a random degree-spectra
matrix from the JDM. Second, we construct a
random graph that realizes the JDM and that
obeys or is consistent with the chosen spectra matrix. This approach
creates the need for a method to guarantee
that the spectra generated from a JDM are graphical.
\begin{figure}
\centering
\includegraphics[width=1.0\textwidth]{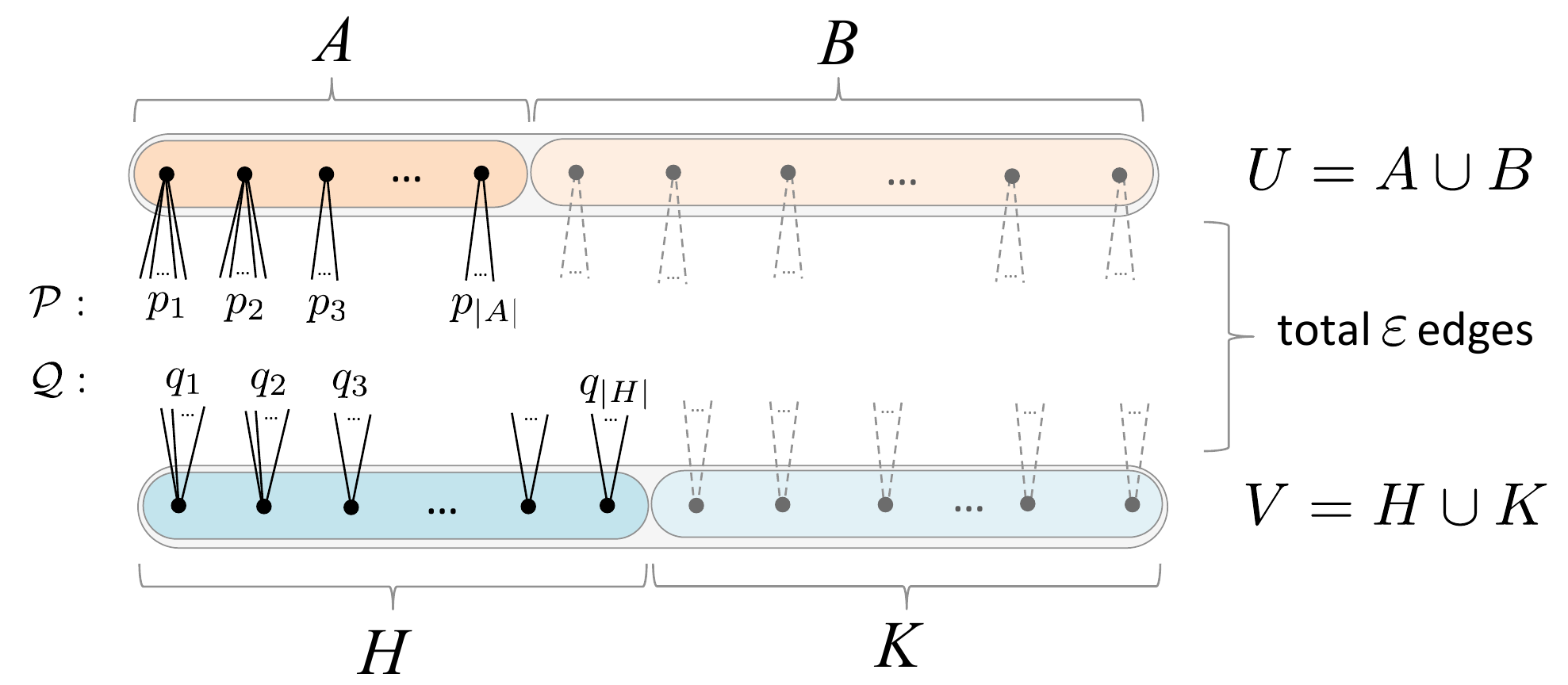}
\caption{\label{fig:schematic}Schematic for the partial degree sequence problem.}
\end{figure}

The generation of a graphical degree-spectra matrix proceeds
systematically, node by node. Therefore, at each step, some
nodes will have an already fixed number of links within some
of the subgraphs (links of a given color), while for the rest
these numbers will not have been determined yet. Thus, at any
time during this process  we have a partial degree sequence of
a bipartite graph.  As the subgraphs must be simple graphs (realizable),
one must be able to  decide whether a partial bipartite degree
sequence is graphical. The sufficient and necessary criterion
for the graphicality of a partial bipartite degree sequence will
be given in Theorem \ref{gratri} below. However, that will not
necessarily mean that the whole JDM $J$ is still realizable,
in other words, how do we know that by guaranteeing the graphicality for a subgraph
$G_{\alpha\beta}$ we have not precluded graphicality of some other 
subgraph $G_{\gamma\delta}$, and ultimately of $J$? The answer to this question
will be given by Theorem~\ref{decomp}, later on. Together, these theorems form the basis for 
our algorithm to generate graphical degree spectra.

Before proving a theorem that provides a graphicality
test for partial bipartite degree sequences, we need to set some notations. Let $A$, $B$, $H$ and
$K$ be four sets of nodes:
\begin{eqnarray*}
 A = \left\lbrace a_1, a_2, \cdots, a_{\left|A\right|}\right\rbrace & \ \ B = \left\lbrace b_1, b_2, \cdots, b_{\left|B\right|}\right\rbrace & \quad \mathrm{with\ }A\cap B = \emptyset\\
 H = \left\lbrace h_1, h_2, \cdots, h_{\left|H\right|}\right\rbrace & \ \ K = \left\lbrace k_1, k_2, \cdots, k_{\left|K\right|}\right\rbrace & \quad \mathrm{with\ }H\cap K = \emptyset
\end{eqnarray*}
and let $U=A\cup B$ and $V=H\cup K$ (see Fig.~\ref{fig:schematic}). The sets
can be of different size, but neither $U$ nor $V$
can be empty. Now, let $\mathcal{P}=\left\lbrace p_1, p_2, \cdots, p_{\left|A\right|}\right\rbrace$
and $\mathcal{Q}=\left\lbrace q_1, q_2, \cdots, q_{\left|H\right|}\right\rbrace$
be two given sequences of integers. They will represent the partial bipartite degree sequences that 
have already been fixed by the algorithm up to that point.
The degrees of the other nodes, specifically those in the sets $B$ and $K$, are not yet specified. 
What is specified is the total number of edges $\varepsilon$ in the bipartition, i.e., the total number 
edges running between the sets $U$ and $V$. Then, the partial
bipartite degree sequence triplet $\left(\mathcal{P},\mathcal{Q},\varepsilon\right)$,
hereafter simply called a \emph{triplet}, is graphical
if there exists a bipartite graph on $U$ and $V$
with $\varepsilon$ edges and degree sequences $\mathcal{D}(U)\big|_{A} = \mathcal{P}$ and 
$\mathcal{D}(V)\big|_{H}=\mathcal{Q}$.
In other words, the bipartite graph must be such that the nodes in
$A$ have degree sequence $\mathcal{P}$
and those in $H$ have degree sequence
$\mathcal{Q}$. The partial degree sequence problem is to decide whether one can choose the
degrees of the nodes in the sets $B$ and $K$ such that the above constraints are satisfied and the bipartite
degree sequence $\cal D$ is graphical.

Since the graph realizing a triplet is bipartite,
the number of edges $\varepsilon$ equals
the number of stubs in either set of nodes:
\begin{equation*}
 \varepsilon = \sum_{i=1}^{\left|U\right|}d_{u_i} = \sum_{i=1}^{\left|V\right|}d_{v_i}\:.
\end{equation*}
The imposed partial
sequences $\mathcal{P}$ and $\mathcal{Q}$
prescribe a certain number of
stubs in the first $\left|A\right|$ nodes
of $U$ and in the first $\left|H\right|$
nodes of $V$. Let these be $P=\sum_{i=1}^{\left|A\right|}p_i$
and $Q=\sum_{i=1}^{\left|H\right|}q_i$,
respectively. Then, the set $B$ must contain
exactly $\varepsilon-P$ stubs; similarly,
the set $K$ must contain exactly $\varepsilon-Q$
stubs. With these considerations, we first define
the concept of a \emph{balanced realization} of a triplet.
Let $\mu\equiv\frac{\varepsilon-P}{\left|B\right|}$
and $\nu\equiv\frac{\varepsilon-Q}{\left|K\right|}$.
A realization of a triplet is defined to be balanced if and only if the
degree of any node in $B$ is either $\left\lfloor\mu\right\rfloor$
or $\left\lceil\mu\right\rceil$, and the degree of
any node in $K$ is either $\left\lfloor\nu\right\rfloor$
or $\left\lceil\nu\right\rceil$. Notice that this means
that if $\mu$ or $\nu$ are integers, then all the nodes
in $B$ or $K$ must have exactly degree $\mu$ or $\nu$, respectively.
Conversely, if they are not integers, then the degrees
of any two nodes in $B$ or in $K$, respectively, can differ at most
by~1. That is, a realization is balanced
if and only if all the degrees of the nodes that one
is free to choose (those in $B$ and $K$) are as close as possible to their
averages $\mu$ and $\nu$. The definition can be equivalently formalized by 
introducing a functional $f$ acting
on $B$ and $K$:
\begin{eqnarray*}
 f\left(B\right) &\equiv \sum_{i=1}^{\left|B\right|}\left\lfloor \left|d_{b_i}-\mu\right|\right\rfloor\;\;\;\mbox{and}\;\;\;
 f\left(K\right) &\equiv \sum_{i=1}^{\left|K\right|}\left\lfloor \left|d_{k_i}-\nu\right|\right\rfloor\:.
\end{eqnarray*}
Then, a realization of a triplet is balanced if and only if both $f\left(B\right)$
and $f\left(K\right)$ vanish.

An important
theorem about the graphicality of triplets
can now be proven.
\begin{theorem}\label{gratri}
The triplet $\left(\mathcal{P},\mathcal{Q},\varepsilon\right)$
is graphical if and only if it admits a balanced realization.
\end{theorem}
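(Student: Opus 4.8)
The plan is to prove the two implications separately, with essentially all the content lying in the ``only if'' direction. The ``if'' direction is immediate: a balanced realization is in particular a realization of the triplet, so its existence shows the triplet is graphical. (One should note at the outset that $U,V\neq\emptyset$, so that $\mu$ and $\nu$ are well defined.)

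For the converse, I would start from an arbitrary bipartite realization $G$ of $\left(\mathcal{P},\mathcal{Q},\varepsilon\right)$ — which exists by hypothesis — and transform it into a balanced realization by a sequence of local edge rewirings that never alter the prescribed degrees on $A$ and $H$, nor the total edge count $\varepsilon$. Suppose $G$ is not balanced; then, say, the degrees of the nodes in $B$ are not all equal to $\lfloor\mu\rfloor$ or $\lceil\mu\rceil$. Since these degrees sum to the fixed value $\varepsilon-P=\mu\left|B\right|$, a short argument — look at the maximum and minimum degree over $B$ and use that their average is $\mu$ — shows there must exist two nodes $b,b'\in B$ with $d_b\geqslant d_{b'}+2$. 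The key observation is then that $b$ has strictly more neighbours in $V$ than $b'$ does, so the neighbourhood $N(b)$ cannot be contained in $N(b')$ and there is a vertex $v\in N(b)\setminus N(b')$. Deleting the edge $bv$ and inserting the edge $b'v$ yields another realization of the triplet: the degree of $v$ is unchanged (so $\mathcal{D}(V)\big|_{H}=\mathcal{Q}$ is preserved if $v\in H$, and $v$'s degree stays admissible if $v\in K$), the degrees of all nodes of $A$ and $H$ are untouched, and $\varepsilon$ is unchanged, while $d_b$ decreases by one and $d_{b'}$ increases by one; moreover no multi-edge is created since $v\notin N(b')$. The same move applies verbatim to $K$, with $\nu$ in place of $\mu$.

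To see that iterating this move terminates in a balanced realization, I would use the potential $\Phi(G)=\sum_{i}d_{b_i}^{2}+\sum_{j}d_{k_j}^{2}$. A rewiring of the above type replaces a pair $(d_b,d_{b'})$ with $d_b-d_{b'}\geqslant 2$ by $(d_b-1,d_{b'}+1)$, which changes $\Phi$ by $-2\left(d_b-d_{b'}-1\right)\leqslant -2<0$; a move inside $B$ leaves all $K$-degrees fixed and a move inside $K$ leaves all $B$-degrees fixed, so $\Phi$ strictly decreases at every step. As $\Phi$ is a non-negative integer, after finitely many steps no such move is available, which means no two degrees in $B$ differ by more than one and likewise in $K$. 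Combined with the fixed stub counts $\varepsilon-P$ and $\varepsilon-Q$, this forces every degree in $B$ to lie in $\left\{\lfloor\mu\rfloor,\lceil\mu\rceil\right\}$ and every degree in $K$ in $\left\{\lfloor\nu\rfloor,\lceil\nu\rceil\right\}$ — equivalently $f\left(B\right)=f\left(K\right)=0$ — so the resulting graph is a balanced realization.

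The step requiring care, and the real crux of the argument, is verifying that the rewiring move is always available whenever the realization is not balanced: one must be sure that the imbalance of $B$ (resp. $K$) genuinely produces a pair of vertices whose degrees differ by at least two, and that such a gap guarantees a usable vertex $v\in N(b)\setminus N(b')$ (equivalently, that one cannot embed a larger neighbourhood inside a smaller one). Both are elementary counting facts, but they are what makes the swap well defined; the remaining bookkeeping — preservation of $\mathcal{P}$, $\mathcal{Q}$ and $\varepsilon$, monotonicity of $\Phi$, and the characterization of the terminal state as balanced — is routine.
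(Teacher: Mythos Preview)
Your proof is correct and follows essentially the same approach as the paper: both start from an arbitrary realization and repeatedly rewire an edge from a high-degree node in $B$ (or $K$) to a low-degree one in the same set, using a strictly decreasing integer potential to guarantee termination at a balanced realization. The only cosmetic difference is the choice of potential---the paper tracks $f(B)$ (and $f(K)$) directly and argues it drops by at least~1 per move, whereas you use the sum of squares $\sum d_{b_i}^{2}+\sum d_{k_j}^{2}$---but the underlying swap move and the logic for its availability are identical.
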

\begin{proof}
Sufficiency is obvious. If the triplet admits any realization,
balanced or not, it is graphical by definition.

To prove necessity, suppose the triplet
is graphical. Then, it admits a realization $G$. If
$G$ is balanced, then there is nothing to do.
Conversely, if $G$ is not balanced, then
$f\left(B\right)$, $f\left(K\right)$, or both,
are greater than~0. Without loss of generality,
assume that $f\left(B\right)>0$.
Then, there exists a node $b_i \in B$ such
that either $d_{b_i}<\left\lfloor\mu\right\rfloor$
or $d_{b_i}>\left\lceil\mu\right\rceil$.
Again without loss of generality,
assume that $d_{b_i}<\left\lfloor\mu\right\rfloor$
(the other cases are treated analogously).
Then, since the number
of stubs within $B$ is fixed, there must
exist a node $b_j \in B$ such that $d_{b_j}>\left\lfloor\mu\right\rfloor$
and thus $d_{b_j}>d_{b_i}$.
But then, there must exist a node $v_k\in V$
such that $v_k$ is connected
to $b_j$ but not to $b_i$.
Now, remove the edge $(v_k,b_j)$
and replace it with $(v_k,b_i)$.
This yields a different realization with the same
degrees for the nodes in $V$,
and in which $f\left(B\right)$ is decreased by at least~1,
as the degrees of $B$ moved towards the
balanced condition. The procedure can be repeated
until $f\left(B\right)=0$, resulting
in a balanced realization.
\end{proof}
A key consequence of this theorem is the following.
\begin{corollary}\label{cortri}
Let $\left(\mathcal{P},\mathcal{Q},\varepsilon\right)$
be a graphical triplet, and let $x$ be a node
in $B$ or in $K$. If there is a realization of
the triplet in which $d_x=\alpha$ and another in
which $d_x=\beta$, with $\alpha<\beta$, then for all
$\gamma$ with $\alpha\leqslant\gamma\leqslant\beta$
there exists a realization in which $d_x=\gamma$.
\end{corollary}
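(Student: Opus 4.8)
The plan is to prove the single-step claim: if $d_x=\beta$ and $d_x=\alpha$ are both realizable with $\alpha<\beta$, then $d_x=\beta-1$ is realizable. Applying this to the pair $(\alpha,\beta)$, then to $(\alpha,\beta-1)$, and so on down to $(\alpha,\alpha+1)$, produces realizations with $d_x$ equal to every value in $\{\alpha,\alpha+1,\dots,\beta\}$. Assume $x\in B$ (the case $x\in K$ is symmetric under exchanging $U\leftrightarrow V$, $A\leftrightarrow H$ and $B\leftrightarrow K$).

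Fix realizations $G$ with $d_x=\beta$ and $G'$ with $d_x=\alpha$, and consider the bipartite graph $D=G\,\triangle\,G'$ on $U\cup V$, colouring an edge \emph{red} if it lies in $G\setminus G'$ and \emph{blue} if it lies in $G'\setminus G$. Since $G$ and $G'$ both have $\varepsilon$ edges, $D$ has equally many red and blue edges, so the signed degree $\delta(w):=\deg_{\mathrm{red}}(w)-\deg_{\mathrm{blue}}(w)$ sums to zero over $U$ and, separately, over $V$; moreover $\delta\equiv0$ on $A\cup H$ because those degrees are pinned by $\mathcal{P}$ or $\mathcal{Q}$, while $\delta(x)=\beta-\alpha\geq1$. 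Pairing red with blue edges maximally at each vertex of $D$ and following the resulting transitions decomposes $E(D)$ into edge-disjoint \emph{alternating} trails, whose endpoints are precisely the nodes with $\delta\neq0$ (each such node $w$ being an endpoint of $|\delta(w)|$ of them), together with some alternating closed trails. In particular $x$ is an endpoint of $m:=\beta-\alpha$ trails, each leaving $x$ along a red edge; a colour/bipartite-parity check then shows that each of these trails ends at a node $w\neq x$ that is either a blue-dominated node of $B$ (the trail having even length) or a red-dominated node of $K$ (odd length).

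The basic move is to \emph{flip} a trail inside $G$: delete its red edges (which lie in $G$) and add its blue edges (which do not). Since the trail alternates colours and reuses no edge, the outcome is again a simple bipartite graph, every interior degree is unchanged, and the number of edges changes by (number of blue trail-edges) minus (number of red trail-edges). If some trail from $x$ has even length, flipping it lowers $d_x$ by $1$, raises by $1$ the degree of its other endpoint $y\in B$, and leaves $\varepsilon$ and all degrees in $A\cup H$ untouched — exactly the realization we want. Otherwise every trail from $x$ is odd and ends in $K$; flipping one such trail $W$ gives $d_x=\beta-1$ but reduces the number of edges by $1$. This parity defect is the crux, and I would remove it with a counting step: sorting the trails of the decomposition by the side and dominant colour of their two endpoints, and invoking $\sum_U\delta=\sum_V\delta=0$, one finds that the number of odd trails joining a red-dominated node of $B$ to a red-dominated node of $K$ equals the number joining a blue-dominated node of $B$ to a blue-dominated node of $K$. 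In the case at hand the former count is $\geq m\geq1$, so a trail $W'$ of the latter type exists; it is edge-disjoint from $W$ and does not have $x$ among its endpoints, so flipping $W$ \emph{and} $W'$ simultaneously changes the number of edges by $-1+1=0$, lowers $d_x$ by exactly $1$, and perturbs only degrees of nodes in $B\cup K$ — again a legal realization with $d_x=\beta-1$, which establishes the single-step claim.

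The one genuinely delicate point I anticipate is this final step: checking that the classification of trails by endpoint type is exhaustive, so that the equality of the two counts of odd trails really does follow from the two degree balances; that $W$ and $W'$ do not interfere is then immediate from edge-disjointness of the decomposition. Preservation of simplicity under flipping, and the reduction to the single-step claim, are routine.
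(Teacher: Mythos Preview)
Your argument is correct, and the delicate point you flagged does go through: the classification of open trails by endpoint type is exhaustive because (i) all trail endpoints have $\delta\neq0$, hence lie in $B\cup K$, and (ii) the bipartite/colour parity forces same-side trails to have opposite-sign endpoints and cross-side trails to have same-sign endpoints, leaving exactly the four types $B^+\!-\!B^-$, $K^+\!-\!K^-$, $B^+\!-\!K^+$, $B^-\!-\!K^-$. Summing $\delta$ over $B$ (equivalently over $U$) then gives the equality of the two odd counts, exactly as you say.

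This is a genuinely different route from the paper's. The paper does not touch the symmetric difference at all: it simply reuses the edge-swap from the proof of Theorem~\ref{gratri}, choosing $x$ itself as the over-full (or under-full) node at each step, so that $d_x$ moves monotonically toward $\lfloor\mu\rfloor$ one unit at a time; doing this once from the $\beta$-realization and once from the $\alpha$-realization covers all intermediate values. That argument is shorter and leans on the ``balanced'' notion already in place, but it needs the average $\mu$ as an anchor and implicitly splits into cases according to where $\alpha,\beta$ sit relative to $\lfloor\mu\rfloor$. Your alternating-trail argument is longer and more technical, but it is self-contained (no reference to $\mu$ or to balanced realizations), handles all cases uniformly, and is the kind of argument that generalises well --- it is essentially the augmenting-path machinery from degree-constrained subgraph and $b$-matching theory.
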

\begin{proof}
Without loss of generality, assume $x\in B$. Then,
there are several cases, each determined by the
relative values of $\alpha$, $\beta$ and $\left\lfloor\mu\right\rfloor$.
The most general case is $\alpha<\left\lfloor\mu\right\rfloor<\beta$,
so consider only this situation.
Start from the realization with $d_x=\beta$.
Repeated applications of the method in the proof of Theorem~\ref{gratri} will eventually
yield a realization in which $d_x=\left\lfloor\mu\right\rfloor$.
For each step, the degree of
$x$ will have decreased by~1.
Therefore, one realization
of the triplet will have been found with $d_x=\gamma$ for
all $\left\lfloor\mu\right\rfloor\leqslant\gamma\leqslant\beta$.

Now, start from the realization with $d_x=\alpha$.
Applying the same step from the proof of Theorem~\ref{gratri} repeatedly will eventually
yield a realization in which $d_x=\left\lfloor\mu\right\rfloor$.
For each of these steps,
the degree of $x$ will have increased by~1. Therefore,
one realization of the triplet will have been found with $d_x=\alpha$ for
all $\alpha\leqslant\gamma\leqslant\left\lfloor\mu\right\rfloor$.
\end{proof}

Notice that, given a graphical triplet,
Corollary~\ref{cortri} also implies the existence of minimum
and maximum allowed degrees for each node whose
degree has not yet been  fixed in that triplet (namely, in $B$ and $K$). That is, a realization
of the triplet exists with a node having
either its minimum or maximum degree,
or any degree between these two values. Of course, the value of the minimum
and maximum degree will depend on which degrees have been fixed up to that
point, so these need to be computed on the fly. How to calculate these degree bounds
will be explained in Subsection \ref{ssec:desc}. 

\subsection{Building a degree-spectra matrix}\label{buidsm}
Corollary~\ref{cortri} suggests the possibility
of a direct, sequential way to build a degree-spectra
matrix from a JDM. However, building the degree-spectra matrix
node by node is a local process, which guarantees 
via Theorem \ref{gratri} only that
the bipartite graph in which the node whose degree spectrum is
being set resides is graphical.  There is a {\em global} constraint, 
however, on every node, namely that the sum of their degree 
spectra must add up to the degree of the class they belong to. 
We have to make sure that the local construction process also 
respects the global constraints, i.e., it is {\em feasible} with it.
The theorem below will show that this sequential construction process
is feasible, and just as importantly, all graphical realizations
of a JDM $J$ can be constructed in this way, i.e., all graphical degree-spectra
matrices can be obtained by this sequential construction process.

\begin{theorem}\label{decomp}
Let ${\cal S}$ be the subset of all the nodes with fixed
spectra; then, there exists a realization of a
JDM $J$ consistent with the fixed spectra if and only
if for every $(\alpha,\beta)$ pair with $\alpha,\beta \in \{1,\ldots,\Delta\}$
there exists a graph $G_{\alpha\beta}$ with $J_{\alpha,\beta}$
edges also satisfying the fixed spectra of ${\cal S}$.
\end{theorem}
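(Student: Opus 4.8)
The plan is to prove the non-trivial direction (the ``if'' direction): assuming that for every pair $(\alpha,\beta)$ the prescribed edge count $J_{\alpha,\beta}$ can be realized by \emph{some} graph $G_{\alpha\beta}$ compatible with the fixed spectra on ${\cal S}$, we must produce a \emph{single} realization of $J$ in which all the $G_{\alpha\beta}$ are realized \emph{simultaneously} and the spectra of the nodes in ${\cal S}$ are exactly as prescribed. The converse is immediate: any realization of $J$ consistent with the fixed spectra restricts, via the degree class representation, to subgraphs $G_{\alpha\beta}$ with the required edge counts and the required partial degrees, so each individual pair is realizable. The content is therefore entirely in assembling the per-pair realizations into a coherent global one.

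The first step is to set up the bookkeeping. For each node $v$ (whether in ${\cal S}$ or not) and each degree class $\beta$, let $s_{v\beta}$ be the number of edges from $v$ into $V_\beta$; for $v\in{\cal S}$ these are fixed by hypothesis, and in all cases $\sum_\beta s_{v\beta} = \deg(v)$ equals the degree of the class $v$ belongs to. The key observation is that the existence of $G_{\alpha\beta}$ realizing $J_{\alpha,\beta}$ with the fixed spectra on ${\cal S}\cap(V_\alpha\cup V_\beta)$ is, by Theorem~\ref{gratri} applied with $A = V_\alpha\cap{\cal S}$, $B = V_\alpha\setminus{\cal S}$, $H = V_\beta\cap{\cal S}$, $K = V_\beta\setminus{\cal S}$ and $\varepsilon = J_{\alpha,\beta}$, equivalent to the existence of a \emph{balanced} realization of the corresponding triplet. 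By Corollary~\ref{cortri}, for each free node $x\in V_\alpha\setminus{\cal S}$ the set of achievable values of $s_{x\beta}$ (over realizations of that one triplet) is a contiguous integer interval $[\ell_{x\beta},h_{x\beta}]$. The plan is then to show that one can pick, for every free node $x\in V_\gamma$, a full spectrum $(s_{x\beta})_{\beta=1}^\Delta$ with each $s_{x\beta}\in[\ell_{x\beta},h_{x\beta}]$ and with $\sum_\beta s_{x\beta} = \gamma$; once such a globally consistent spectrum assignment exists, each $G_{\alpha\beta}$ can be built independently (its node degrees on both sides are now completely prescribed and, lying inside the achievable intervals together with the correct column sums $J_{\alpha,\beta}$, form a graphical bipartite sequence), and their disjoint union is the desired realization of $J$.

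The heart of the argument is therefore a transportation/flow feasibility claim: the intervals $[\ell_{x\beta},h_{x\beta}]$ and the row constraints $\sum_\beta s_{x\beta}=\gamma$ and column constraints $\sum_{x\in V_\alpha} s_{x\beta} = J_{\alpha,\beta}$ (with the diagonal double-counting convention) are mutually compatible. I would prove this by an exchange/induction argument in the same spirit as the proof of Theorem~\ref{gratri}: start from the per-pair balanced realizations, which already satisfy all column sums and all interval constraints but may violate the per-node row sums $\sum_\beta s_{x\beta}=\gamma$ for free nodes; then show that whenever some free node $x\in V_\gamma$ has $\sum_\beta s_{x\beta}>\gamma$ there is another free node $x'\in V_\gamma$ with $\sum_\beta s_{x'\beta}<\gamma$, locate a class $\beta$ in which $s_{x\beta}$ can be decremented and $s_{x'\beta}$ incremented while staying inside the respective achievable intervals (using contiguity from Corollary~\ref{cortri} on both sides), and perform the swap inside $G_{\gamma\beta}$ (or $G_{\beta\gamma}$). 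Each such move strictly decreases the total row-sum discrepancy $\sum_{x\ \mathrm{free}}|\sum_\beta s_{x\beta}-\deg(x)|$ while preserving every column sum $J_{\alpha,\beta}$ and every fixed spectrum in ${\cal S}$; iterating drives the discrepancy to zero.

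The main obstacle I anticipate is justifying the existence of the ``transfer class'' $\beta$ at each step — i.e.\ that one cannot get stuck with $\sum_\beta s_{x\beta}$ too large, $\sum_\beta s_{x'\beta}$ too small, yet no single class $\beta$ in which $s_{x\beta}$ is above its floor $\ell_{x\beta}$ \emph{and} $s_{x'\beta}$ is below its ceiling $h_{x\beta}$. Here one must use that the achievable intervals are not arbitrary: because the column totals $J_{\alpha,\beta}$ are held fixed and the two free nodes live in the \emph{same} degree class $V_\gamma$, any configuration feasible for $x$ is, by symmetry of roles within $V_\gamma\setminus{\cal S}$, feasible for $x'$, so $\ell_{x\beta}=\ell_{x'\beta}$ and $h_{x\beta}=h_{x'\beta}$; if no transfer class existed, then in every class $\beta$ either $s_{x\beta}=\ell_\beta$ or $s_{x'\beta}=h_\beta$, which forces $\sum_\beta s_{x\beta}\leqslant\sum_\beta s_{x'\beta}$, contradicting $\sum_\beta s_{x\beta}>\gamma>\sum_\beta s_{x'\beta}$. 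Making this comparison fully rigorous — in particular handling the diagonal class $\beta=\gamma$, where the two stubs of an intra-class edge interact, and checking that the single-edge swap used to effect the transfer indeed exists as a concrete rewiring inside the relevant $G_{\alpha\beta}$ rather than merely at the level of degree sequences — is the delicate part, and I would treat it by reducing the diagonal case to the bipartite one exactly as the paper does when it represents $G_{\alpha\alpha}$ through the directed/bipartite construction of Ref.~\cite{Kim12}.
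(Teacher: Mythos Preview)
Your overall architecture --- take a realization of each $G_{\alpha\beta}$, form their union, and then repair the total degrees of the free nodes by edge swaps within a single degree class --- is exactly the paper's approach. However, you have wrapped it in unnecessary machinery that creates the very obstacle you then struggle with.

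The paper's sufficiency argument is far shorter. It does \emph{not} invoke Theorem~\ref{gratri}, Corollary~\ref{cortri}, balanced realizations, or the intervals $[\ell_{x\beta},h_{x\beta}]$ at all. It simply picks \emph{any} realization of each $G_{\alpha\beta}$ satisfying the fixed spectra, takes the union $G$, and observes: if some free $w\in V_\alpha$ has $d_w>\alpha$, then because $\sum_{x\in V_\alpha} d_x = \alpha|V_\alpha|$ (the row sums of $J$ are correct by hypothesis) there is a free $v\in V_\alpha$ with $d_v<\alpha$; since $d_w>d_v$, plain pigeonhole on the neighborhoods gives a node $u$ adjacent to $w$ but not to $v$. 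Replacing the edge $(u,w)$ by $(u,v)$ keeps every $J_{\alpha,\beta}$ count, keeps the spectrum of $u$ (because $v$ and $w$ lie in the same class), and reduces the total degree defect. Iterate.

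Notice that the ``transfer class'' $\beta$ you worry about is simply the degree class of $u$, and its existence is immediate from $d_w>d_v$: there must be some $\beta$ with $s_{w\beta}>s_{v\beta}$, and any $u\in V_\beta$ adjacent to $w$ but not $v$ works. There is no need to check that the post-swap values stay inside achievable intervals --- the swap produces an actual simple graph, so whatever degree sequence results in $G_{\gamma\beta}$ is automatically realizable. Your interval constraint is self-imposed and is precisely what generates the delicate case analysis (and the diagonal worry) that the paper never encounters. Drop the intervals and the balanced-realization preprocessing, and your argument collapses to the paper's three-line swap.
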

\begin{proof}
Necessity is obvious. If there exists
a realization of $J$ satisfying the spectra,
then each subgraph between any pair of degree
classes both satisfies the spectra and has the right
number of edges.

To prove sufficiency, assume that we have a fixed degree
spectrum for all the nodes in ${\cal S}$ and we have guaranteed
the graphicality of
all the subgraphs $G_{\alpha\beta}$. 
They have the right
number of edges $J_{\alpha,\beta}$ and their
nodes satisfy the \textit{fixed} spectra specified
in the subset ${\cal S}$. Since we have guaranteed graphicality for all the
$G_{\alpha\beta}$ subgraphs with these constraints, let us consider some graphical 
realization for each such subgraph and consider their union graph $G$. 
If the ``free''
nodes, i.e., those without a fixed spectrum,
have all the correct degree in $G$ (i.e., every node $v \in V_{\alpha}$ has $d_v = \alpha$
for all $\alpha$), then there is nothing to do.
Now, assume they don't. Since the total number
of edges in each $G_{\alpha\beta}$ is correct by hypothesis,
there must exist a degree $\alpha$ and two free
nodes $v$ and $w$ belonging to $V_\alpha$ such that
$d_v<\alpha$ and $d_w>\alpha$.
Thus, there must exist a node
$u$ connected to $w$ but not to $v$. Then, erase
the edge $(u,w)$, and replace it with $(u,v)$.
This leaves the numbers of edges
in all $G_{\alpha\beta}$ unchanged, and does not change
the degree spectrum of $u$, because $v$ and $w$ belong
to the same degree class. Repeating this procedure
results eventually in all the nodes having the
correct degree.
\end{proof}

Theorem~\ref{decomp} is fundamentally important as it 
justifies a systematic, node-by-node
approach in building a graphical degree-spectra matrix.
In fact, so long as one guarantees the possibility
of subgraphs with the correct number of
edges, a partial degree-spectra matrix maintains
the graphicality of the JDM.

The only detail left is specifying how to choose
the numbers that form the degree spectra.
Fortunately, this is straightforward.
As mentioned in the previous Subsection,
an implication of Corollary~\ref{cortri}
is the existence of minimum and maximum allowed degrees
for nodes in partial degree sequences.
Let them be $m$ (minimum) and $M$ (maximum).
But a partial degree sequence is nothing
else than a partially built degree spectrum,
if one recognizes the node sets $U$ and $V$ as two degree
classes. Then, a condition that must
be satisfied in building a degree-spectra
matrix is that any new number chosen to augment a partially
built degree spectrum has to be within
these bounds. However, one must also consider that
if a node belongs to a certain degree class,
it must have the correct total degree.

To state both conditions, assume
the degree spectrum of node $v\in V_\alpha$
is being built.
Let $\Gamma$ be the set of degree classes for
which a spectrum element has already
been chosen, and let $S_{\beta v}$ be the
element to determine next.
Then, a valid value $k$ for $S_{\beta v}$
must satisfy the two conditions
\begin{eqnarray}
m_\beta\leqslant k\leqslant M_\beta\\
\sum_{\eta\notin\left(\Gamma\cup\beta\right)}m_\eta\leqslant \alpha-k-\sum_{\eta\in\Gamma}S_{\eta v}\leqslant\sum_{\eta\notin\left(\Gamma\cup\beta\right)}M_\eta\:.
\end{eqnarray}
Below, in Subsection \ref{ssec:desc} we describe how to compute the min and max values for degree
spectra elements.

\section{The algorithm}

\subsection{Description}\label{ssec:desc}

We are now ready to describe our JDM sampling algorithm.
The algorithm is composed of two parts. The first
is a spectra sampler that randomly generates degree-spectra matrices
from a graphical JDM $J$:
\begin{enumerate}
 \item Initialize $i=1$.
 \item Set $\alpha=1$.
 \item Let $l$ be the number of the residual, unallocated stubs of node $i$. If $l≠0$:
\begin{enumerate}
 \item If $J_{d_i,\alpha}≠0$:
\begin{enumerate}
 \item For all $\alpha\leqslant\beta\leqslant\Delta$, if $J_{d_i,\beta}≠0$, find $M_k$ and $m_k$; otherwise, set $m_k=M_k=0$.
 \item Compute $t=\sum_{\beta=\alpha+1}^\Delta m_\beta$ and $T=\sum_{\beta=\alpha+1}^\Delta M_\beta$.
 \item Find the actual minimum and maximum allowed for the degree-spectrum element:
$r=\max\left\lbrace m_\alpha, l-T\right\rbrace$ and $R=\min\left\lbrace M_\alpha, l-t\right\rbrace$.
 \item Extract an integer $S_{\alpha,i}$ uniformly at random between $r$ and $R$.
\end{enumerate}
 \item Increase $\alpha$ by~1, and go to step~(iii).
\end{enumerate}
 \item Increase $i$ by~1. If $i\leqslant N$, go to step~(ii).
\end{enumerate}
\begin{figure}
\centering
\includegraphics[width=0.75\textwidth]{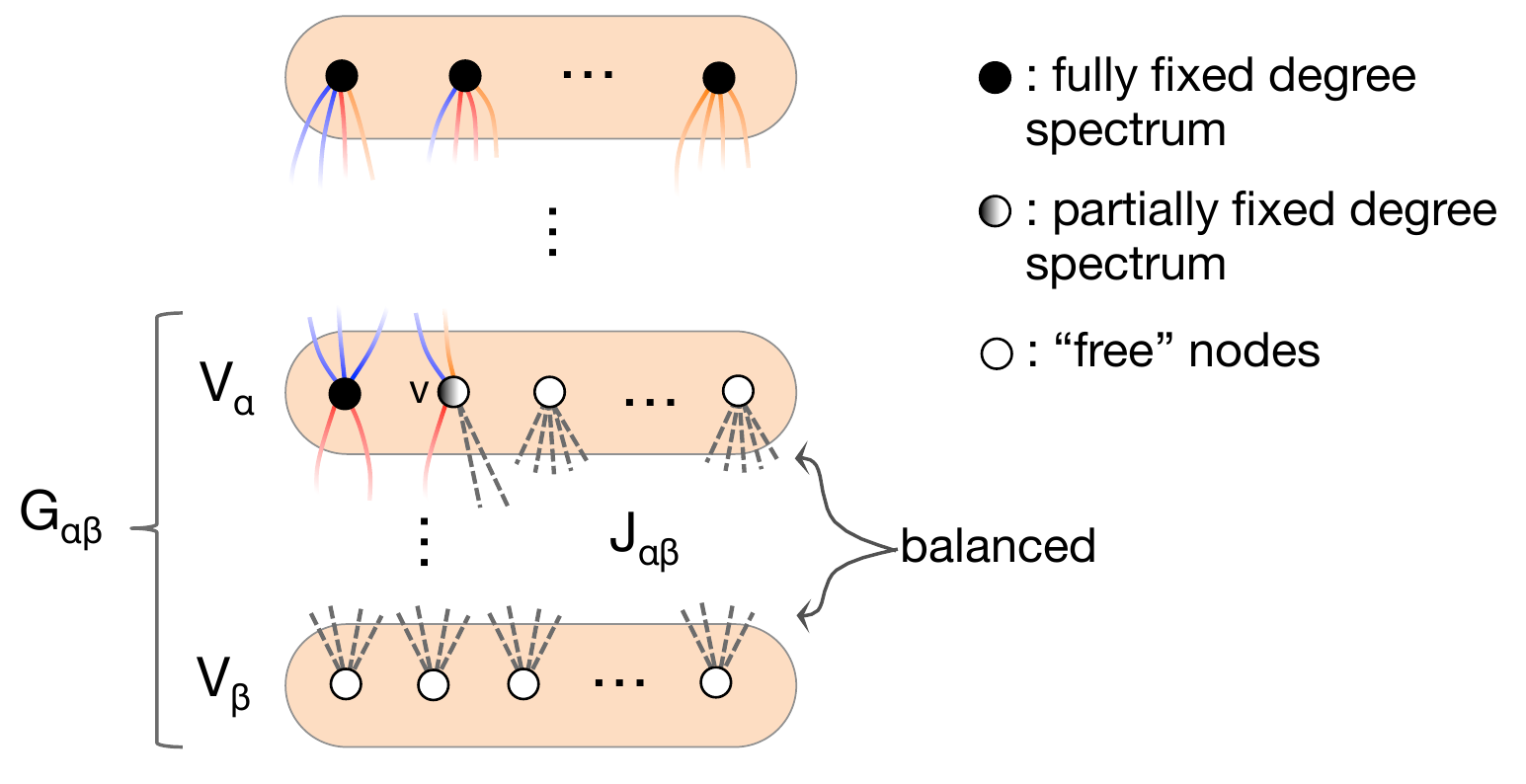}
\caption{\label{fig:spectrum} Sequentially determining graphical degree spectra 
consistent with a given JDM $J$. }
\end{figure}

To find the values of $m$ and $M$ in step (iii).a.1 above, consider
the degrees of the nodes belonging to $V_{\alpha}$
and $V_{\beta}$ in $G_{\alpha\beta}$. In the formalism of
Subsection~\ref{degspe}, the already
fixed spectra elements are equivalent to the sequences $\mathcal{P}$
and $\mathcal{Q}$. Then, to test the viability of a given value
as a degree-spectrum element, assign it to the element being determined,
complete the degree sequence making it balanced, and test it for
graphicality, see Fig. \ref{fig:spectrum}. If the sequence is graphical, then the triplet has
a balanced realization, which by Theorem~\ref{gratri} is a necessary
and sufficient condition for the existence of a subgraph corresponding
to the spectrum element being determined. If $G_{\alpha\beta}$ is unipartite,
the graphicality test can be done using the
fast method described in~\cite{Del10}.
The situation is marginally different
if $G_{\alpha\beta}$ is bipartite. In this case,
as previously mentioned, the degree
sequence can be built as a BDS in which
nodes of degree $\alpha$ only have
incoming edges, and nodes of degree
$\beta$ only have outgoing ones.
This sequence can then be tested with the
fast directed graphicality test described in~\cite{Kim12}.

Thus, to find the minimum value $m$ one can simply run a sequential
test, checking for valid spectrum values from~0 onwards. The
first successful value is $m$. Then, to
find $M$, use bisection to test all the values
from $m+1$ to the theoretical maximum, looking for the
largest number allowed. Clearly, the theoretical maximum at that stage is the
degree of the class the node belongs to 
minus the sum of the already fixed spectra values for that degree.

These considerations also clarify the nature
of the second part of the algorithm, which
samples realizations of the JDM from an extracted degree
spectra matrix. Summarizing,
\begin{itemize}
 \item JDM realizations can be decomposed into a set of independent unipartite and bipartite graphs.
 \item The degree spectra define the degree sequences of the component subgraphs.
\end{itemize}
Then, to accomplish the actual sampling,
extract the degree sequences from the degree
spectra and use them in the graph sampling algorithms
for undirected and directed graphs presented in~\cite{Del10,Kim12} and in here in \ref{apxA}.
Every time a sample is generated, it constitutes a subgraph
of a JDM realization. All that is needed in the end
is simply to list the edges correctly, since
the graph realizing the JDM is the union of all
the unipartite and bipartite subgraphs into
which it has been decomposed.

\subsection{Sampling weights}

Our algorithm does not extract
all degree-spectra matrices from a JDM with
the same probability. However, the
relative probability for the extraction of
each spectra matrix is easily computed,
and it can be used to reweight the sample
and obtain unbiased sampling.
If every new
element of a degree-spectra matrix is
extracted uniformly at random between $r$ and $R$,
its probability of being chosen is simply $\frac{1}{R-r+1}$.
Therefore, the probability of extracting a
given spectra matrix $S$ is $p\left(S\right)=\prod_{i=1}^m\frac{1}{R-r+1}$,
where $m$ is the total number of elements extracted.
Then, an unbiased estimator for a network
observable $Q$ on an ensemble of $Z$ spectra
matrices can be computed using the weighted average
\begin{equation}\label{degspest}
 \left\langle Q\right\rangle = \frac{\sum_{i=1}^Z Q_i w_i}{\sum_{i=1}^Z w_i}\:.
\end{equation}
In the expression above, $Q_i$ is the value that $Q$
assumes on the $i^\mathrm{th}$ sampled matrix. Indicating by
$r_j$ and $R_j$ the values that $r$ and $R$ assume
for the $j^\mathrm{th}$ matrix element extracted, the
weights are
\begin{equation}\label{weights}
w_i=\prod_{j=1}^m\left(R_j-r_j+1\right)\:.
\end{equation}
\begin{figure}
\centering
{\includegraphics[width=0.6\textwidth]{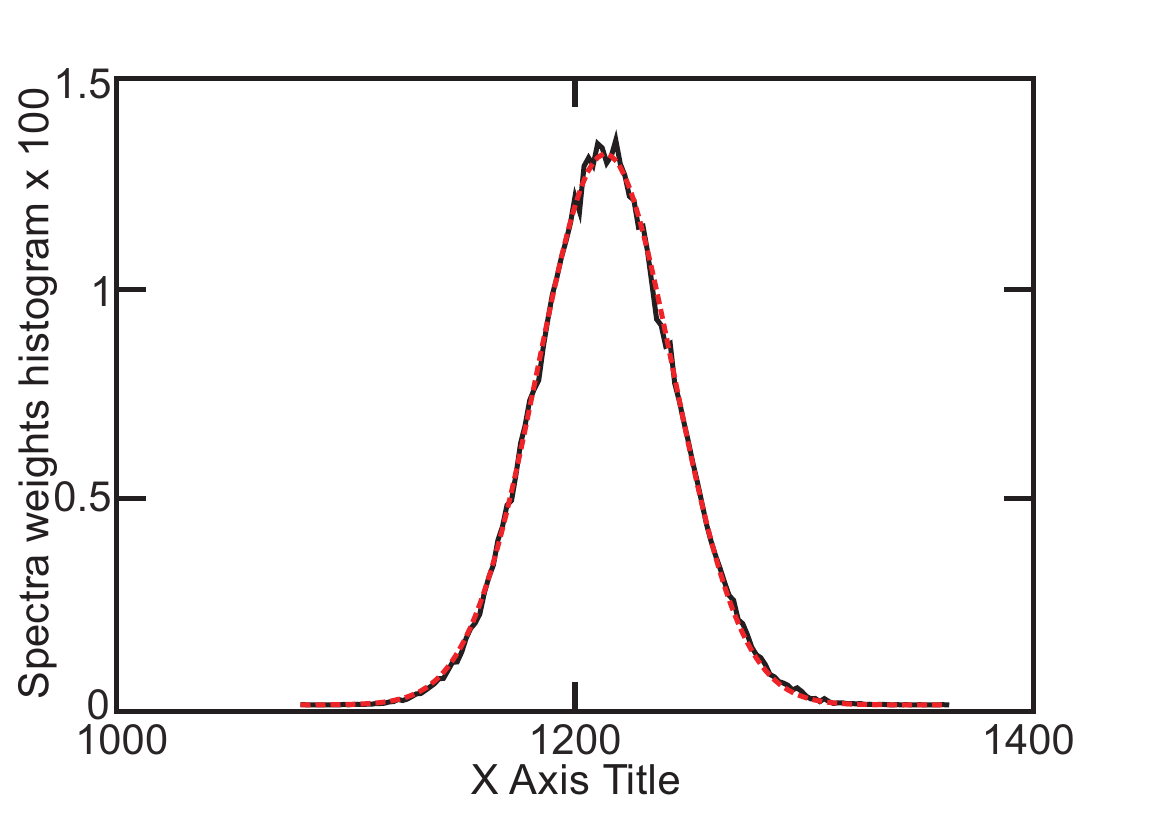}\vspace*{-18pt}}
\includegraphics[width=0.6\textwidth]{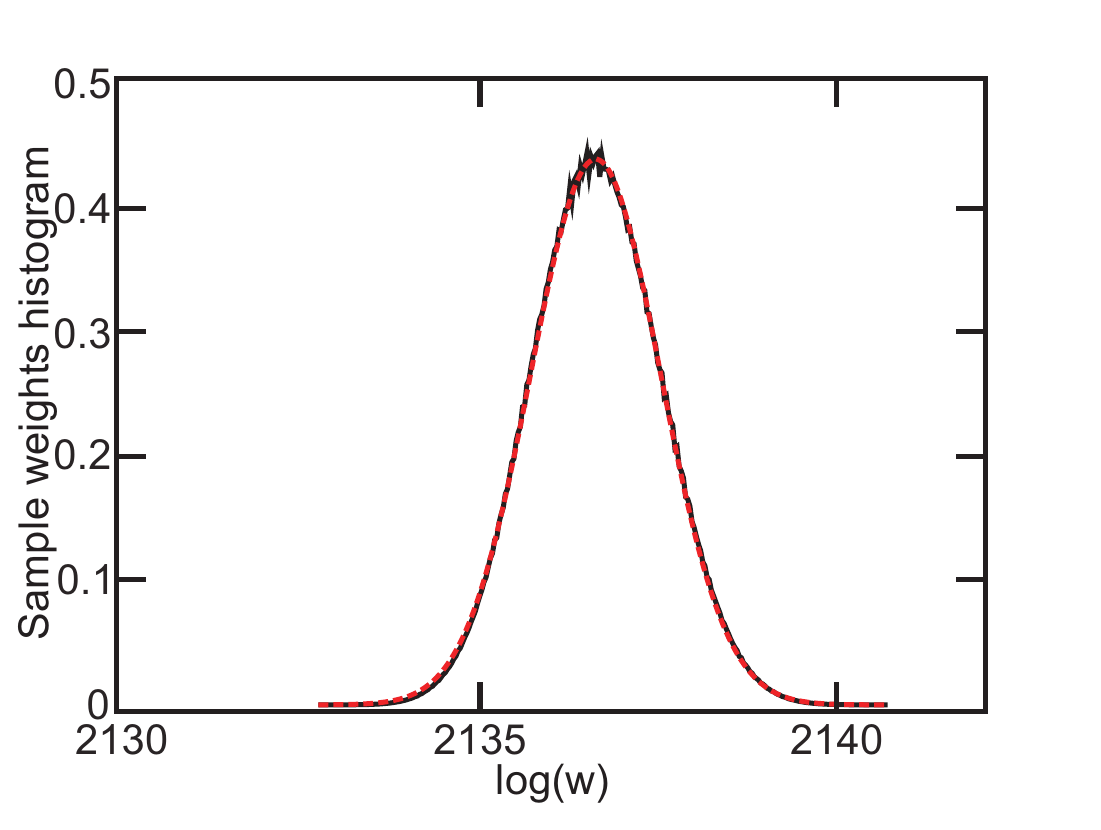}
\caption{\label{Fig1}Log-normal distribution of weights.
The top panel shows the histogram of the natural logarithms
of the weights for an ensemble of $10^5$ degree-spectra
matrices; the bottom panel shows the histogram for an ensemble
of $10^8$ sample weights. Both distributions (solid black
lines) are well fitted by a Gaussian curve (dashed red line).}
\end{figure}

Of course, besides the spectra matrix, every subgraph
has its own sampling weight. Thus, the total weight
of a single JDM sample is the product of the corresponding
spectrum weight and all the subgraph weights. To describe
the distribution of the sample weights, first recall
that the individual subgraph weights are log-normally
distributed~\cite{Del10,Kim12}. Thus, as the sample
weights are their product, we expect them to be log-normally
distributed too. Also, for large JDMs, where $\Delta^2\gg1$,
the $m$ factors in Eq.~\ref{weights} are effectively
random. Thus, our expectation is that the spectra weights
are log-normally distributed as well. To verify this,
we extracted the JDM of a random scale-free network
with 1000~nodes and power-law exponent of~$2.5$,
and used it to generate an ensemble of $10^5$ degree
spectra matrices and one of $10^8$ JDM samples of
a single spectra matrix. Figure~\ref{Fig1} shows that
the histograms of the logarithms of spectra matrix
weights and sample weights are well approximated
by a Gaussian fit, supporting our assumptions.

A simple and small example is provided in~\ref{apxB}.
There, we analytically compute the JDM ensemble averages
of the local clustering coefficients of nodes of all
degrees, based on unweighted sampling and also based
on weighted sampling, with the weights provided by the
algorithm. In table~\ref{valid}, we show the results
of simulations using our algorithm, taking into account
the sample weights (as described above), and simply computing
the averages of the clustering coefficients over the
samples generated. The results between theoretical and
simulated measures agree very well. The differences between
weighted and unweighted versions can be also appreciated,
and while they are small in this example, they are measurable
and need to be taken into account in general.

\subsection{Computational complexity}

To determine the computational complexity
of the algorithm, first note that the
main cost in creating a spectra matrix comes
from the repeated graphicality tests. Let
$A$ be the number of non-empty degree classes
in the JDM
\begin{equation*}
 A=\left|\left\lbrace \alpha:\ V_\alpha≠\emptyset\right\rbrace\right|\:.
\end{equation*}
Then, for each of the $NA$ non-trivial elements
in the degree-spectra matrix, $A$ tests are needed,
each with a computational complexity
of the order of the number of nodes in the corresponding
degree class. Thus, the total computational complexity
for the spectra construction part of the algorithm
is
\begin{equation}\label{compsp}
C_S = {\cal O}\left(N\sum_{\alpha=1}^\Delta\sum_{\beta=\alpha}^\Delta\left|V_\beta\right|\right)\:.
\end{equation}
Notice that in our treatment one is free to choose the order
of the degree classes. Thus, to minimize the complexity, one
can simply determine the degree-spectra elements in descending
order of degree class size. Then, the worst case corresponds
to the equipartition of the nodes amongst degree classes, 
$\left|V_\alpha\right|=\frac{N}{A}$.
In this case, it is
\begin{equation*}
C_S={\cal O}\left(NA^2\frac{N}{A}\right)={\cal O}\left(N^2A\right)\:,
\end{equation*}
which reduces to
\begin{equation*}
C_S={\cal O}\left(N^3\right)
\end{equation*}
if the number of degree classes is of the same order
as the number of nodes. 

A more precise estimate for a given JDM
can be obtained by rewriting Eq.~\ref{compsp} as
\begin{equation*}
 C_S = {\cal O}\left(N^2\sum_{\alpha=1}^\Delta\sum_{\beta=\alpha}^\Delta P\left(\beta\right)\right)\:,
\end{equation*}
where the \emph{degree distribution} $P\left(d\right)=\left|V_d\right|/N$
is the probability that a randomly chosen node has degree $d$. It is easy to
see, then, that the worst case is unlikely to occur. Consider for instance systems
of widespread insterest, such as scale-free networks, 
for which $P\left(d\right)\sim d^{-\gamma}$ with $\gamma > 2$.
Then, in the limit of large networks, the equation above becomes
\begin{equation*}
 C_S= {\cal O}\left(N^2\int_1^\infty\mathrm{dx}\int_x^\infty\mathrm{dk}\left(\gamma-1\right)k^{-\gamma} \right)
= {\cal O}\left(\frac{N^2}{\gamma-2}\right)
= {\cal O}\left(N^2\right)\:.
\end{equation*}
Thus, in this case,
the complexity leading order for spectra
matrix extraction is only quadratic.

Given a degree-spectra matrix, to construct a JDM realization
one then needs to build ${\cal O}\left(A^2\right)$ subgraphs,
each with ${\cal O}\left(\frac{N}{A}\right)$ nodes and ${\cal O}\left(\frac{M}{A}\right)$
edges. For each subgraph, the computational complexity is of
the order of the number of nodes multiplied by the number of
edges. Thus, the total sampling complexity is
${\cal O}\left(A^2\frac{N}{A}\frac{M}{A}\right)={\cal O}\left(NM\right)$.
Therefore, the total complexity of the graph construction part
of our method is ${\cal O}\left(N^2\right)$ for sparse
networks, and ${\cal O}\left(N^3\right)$ for dense ones.
Once more, we do not expect the worst case complexity to occur
often. For example, in the already mentioned case of scale-free
networks, which are always sparse~\cite{Del11}, the total complexity
of our algorithm would only be quadratic. A less efficient sampling
method has been developed recently~\cite{Gjo15}, but it is based
on backtracking, producing results containing biases that are
uncontrolled and that cannot be estimated.

\section{Conclusions}

In summary, we have solved the problem of constrained
graphicality when degree correlations are specified,
developing an exact algorithm to construct
and sample graphs with a specified joint-degree matrix.
A JDM specifies the number of edges that occur between
degree classes of nodes (nodes of given degrees), and 
thus completely determines
all pairwise degree correlations in its realizations.
Our algorithm is guaranteed to successfully build a random
JDM sample in polynomial time, systematically, and without
backtracking. It is also guaranteed to be able to build any of the
graphical realizations of a JDM. Each graph is constructed independently and thus
there are no correlations between samples.
Although the algorithm does introduce a sample bias,
the relative probability for the construction of each
sample is computable, which allows the use of weighted
averages to obtain unbiased sampling (importance sampling).
However, importance sampling is only exact in the limit of an infinite number of
samples. This raises the issue of convergence. The lognormal distribution of
weights makes convergence slow, but for small- to medium-sized networks good
accuracy can be achieved, and quantities computed as if from uniform sampling.
Improving the speed of convergence is a challenging problem,
partly because it depends on the constraining JDM, and will be addressed in future publications.

Degree correlations in real-world systems have been widely observed.
Social networks are known to be positively correlated, and the concept
of assortativity was known to the sociological literature before it
was employed in applied mathematics. Technological networks are also
characterized by particular correlation profiles. Moreover, correlations
significantly affect the dynamics of spatial processes, such as the
spread of epidemics~\cite{Boc06}. Thus, with our algorithm, one can model
correctly complex systems of general interest with desired degree assortativity.
For the first time, this enables the study of networks in which the
correlations are not determined solely by the nodes' degrees.
For instance, there exist many studies about social networks, consisting
of a comparison between some specific real-world network and a randomized
ensemble of networks with the same degree sequence or degree distribution.
As social networks are scale-free, these studies often just sample
the same sequence or the same type of power-law sequences to produce
null-model results. However, social networks are assortative, while random
scale-free networks are on average disassortative. Thus, the average
correlations of scale-free networks make degree-sequence and degree-distribution
sampling problematic if one is trying to consider a random model of
a social network. Our method allows one to avoid this problem by directly
imposing the correlations, rather obtaining only those imposed by the degree
sequence.

Upper bounds on the computational complexity of our algorithm
show that in the worst case it is cubic in the number of nodes.
However, we provide a way to compute the expected worst-case complexity
if the degree distribution of the networks considered is known.
This shows that, for commonly studied cases such as scale-free
networks, the maximum complexity is only of the order of $N^2$,
making the algorithm even more efficient.

\ack
KEB, PLE, IM, and ZT acknowledge support by
the AFOSR and DARPA through Grant FA9550-12-1-0405.
KEB also acknowledges support from the NSF through Grant
DMR-1206839. CIDG acknowledges support by EINS, Network
of Excellence in Internet Science, via the European
Commission's FP7 under Communications Networks, Content
and Technologies, Grant 288021 and ZT also acknowledges 
support from DTRA through Grant HDTRA-1-09-1-0039.

\appendix
\section{Direct construction of random 
directed and undirected graphs with prescribed degree sequence} \label{apxA}

In order to fully describe our algorithm for sampling graphs
with prescribed degree correlations, we include in this appendix
succinct descriptions of our algorithms for sampling
random undirected~\cite{Del10} and directed~\cite{Kim12} graphs
with a prescribed degree sequence. Both are used in our algorithm
to sample graphs with a prescribed JDM, and both work by directly
constructing the graphs. So long as the prescribed degree sequence
is graphical, both algorithms are guaranteed to successfully
construct a graph without backtracking. They accomplish this
by building the graph an edge at a time, connecting pairs of
stubs, maintaining the graphicality of the residual stubs throughout
the construction process. The algorithms make use of our 
fast methods for testing the graphicality of degree sequences, which
are also described below. The worst case complexity is
${\mathcal O}\left(N\right)$ for the graphicality tests,
and ${\mathcal O}\left(N M \right)$ for both sampling algorithms.
Both algorithms generate biased samples, but we also state the relative
probability of generating a sample, which can be used to calculate
unbiased statistical averages. See our previous publications for
proof of the correctness of these algorithms~\cite{Del10,Kim12};
they are stated without proof or detailed explanation here.

\subsection{Undirected graphs}

A nonincreasing sequence of integers 
${\mathcal D}=\left\lbrace d_1, d_1, \dots, d_{N}\right\rbrace$
is graphical if and only if 
$\sum_{i=1}^{N}d_i$
is even, and $L_k\leqslant R_k$ for all $1\leqslant k< N$,
where $L_k$ and $R_k$ are given by the recurrence relations
\begin{eqnarray}
 L_1 &= d_1\\
 L_k &= L_{k-1}+d_k
\end{eqnarray}
and
\begin{eqnarray}
 R_1 &= N-1\\
 R_k &=\left\lbrace\begin{array}{l}
                    R_{k-1}+x_k-2\quad\ \,\forall k<k^\ast\\
		    R_{k-1}+2(k-1)-d_k\quad\forall k\geqslant k^\ast
                   \end{array}\right.
\end{eqnarray}
and we defined the \emph{crossing indices}
$x_k=\min\left\lbrace i: d_i < k\right\rbrace$,
and $k^\ast = \min\left\lbrace i: x_i < i+1\right\rbrace$. 
Thus, to test the graphicality of ${\mathcal D}$:
\begin{enumerate}
\item Sum the degrees to determine if
$\sum_{i=1}^{N}d_i$ is even. If false, then stop; ${\mathcal D}$ is not graphical. If true, continue.
While summing the degrees, also calculate the crossing indices $x_k$ for each $k$ and determine $k^*$.
\item Test if $L_1 \leq R_1=N-1$. If false, then stop; ${\mathcal D}$ is not graphical. If true, set $k=2$ and continue.
\item Test if $L_k \leq R_k$. If false, then stop; ${\mathcal D}$ is not graphical. If true, increase $k$ by one and repeat. 
Continue until $k=N-1$, then stop; ${\mathcal D}$ is graphical.
\end{enumerate}

Given a nonincreasing graphical degree sequence ${\mathcal D}$,
a random undirected graph that realizes ${\mathcal D}$ can be constructed by:
\begin{enumerate}
\item To each node, assign a number of stubs equal to its degree.
 \item Choose a hub node $i$. Any node can in principle be chosen, for example, the node with the largest degree.
 \item Create a set of forbidden nodes $X$, which initially contains only $i$.
 \item Find the set of allowed nodes $A$ to which $i$ can be linked
preserving the graphicality of the remaining construction process.
To find $A$, first determine the \emph{maximum fail degree} $\kappa$
using the method described below. Then $A$ will consist of all nodes
$j \notin X$ that have remaining degree greater than $\kappa$.
 \item Choose a random node $m \in A$ and connect $i$ to it. 
 \item Reduce the value of $d_i$ and $d_m$ in ${\mathcal D}$ by 1, and reorder it.
 \item If $m$ still has unconnected stubs, add it to the set of forbidden nodes $X$.
 \item If $i$ still has unconnected stubs, return to step~(iv).
 \item If nodes still have unconnected stubs, return to step~(ii).
\end{enumerate}

To determine the maximum fail degree in a degree sequence $\mathcal{D}$ being sampled,
build the \emph{residual degree sequence}
$\mathcal{D}'$, 
by connecting
the hub node $i$ with remaining degree $d_i$ to the $d_i-1$ nodes with the largest degrees that
are not in the forbidden set $X$ and reducing the elements of ${\mathcal D}$ accordingly.
Then, compute the graphicality test inequalitites.
Each inequality potentially yields a fail-degree candidate,
depending on the values of $L_k$ and $R_k$. For each
value of $k$ there are only 3 possibilities:
\begin{enumerate}
 \item[(a)] $L_k=R_k$
 \item[(b)] $L_k=R_k-1$
 \item[(c)] $L_k\leqslant R_k-2$
\end{enumerate}
In case~(a), the degree of the first non-forbidden node whose index is greater than $k$ is the
fail-degree candidate.
In case~(b), the degree of the first non-forbidden node whose index is greater
than $k$ and whose degree is less than $k+1$ is the fail-degree candidate.
In case~(c), there is no fail-degree candidate.
The sequence of candidate nodes
is non-decreasing until the fail-degree is found. Thus,
one can stop the calculation when either the current
fail-degree candidate is less than the previous one,
or when a case~(a) happens.

This algorithm generates graph samples biasedly. 
However, the relative
probability of generating a particular sample $\mu$ is
\begin{equation}
 p_\mu = \prod_{i=1}^m \bar d_i!\prod_{j=1}^{\bar d_i}\frac{1}{\left|A_{i_j}\right|}\:,
\end{equation}
where 
$\bar d_i$ is the residual degree
of node $i$ when it is chosen as a hub,
$m$ is the total number of hubs used,
and $A_{i_j}$ is the allowed set for the $j^\mathrm{th}$
link of hub $i$.
Thus, an unbiased estimator
for a network observable $Q$ for any target distribution $P$ is the weighted
average
\begin{equation}
 \left\langle Q\right\rangle = \frac{\sum_{i=1}^M Q_{\mu_i} w_{\mu_i} P\left(\mu_i\right)}{\sum_{i=1}^M w_{\mu_i} P\left(\mu_i\right)}\:,
 \label{unbiasedestimator}
\end{equation}
where $M$ is the number of samples and $w_{\mu_i}=p_{\mu_i}^{-1}$.
For uniformly sampling
the networks, $P$ is constant and it cancels
out of the formula.

\subsection{Directed graphs}

A bi-degree sequence (BDS)  $\mathcal{D} = \left\lbrace\left(d_1^-, d_1^+\right), \left(d_2^-, d_2^+\right), \dots, \left(d_N^-, d_N^+\right)\right\rbrace$
of integer pairs, ordered so that the first
elements of each pair form a non-increasing sequence, is graphical if and only if
$\sum_{i=1}^N d_i^- = \sum_{i=1}^N d_i^+$,
and $L_k\leqslant R_k$ for all $1\leqslant k\leqslant N-1$,
where $L_k$ and $R_k$ are given by the recurrence relations
\begin{eqnarray}
 L_1 &= d_1^-\\
 L_k &= L_{k-1}+d_k
\end{eqnarray}
and
\begin{eqnarray}
 R_1 &= N-1-G_1\left(0\right)\\
 R_k &= \left\lbrace\begin{array}{l}
                    R_{k-1} + N - \bar G_{k-1}\left(k-1\right)\quad\quad\forall d_k^+<k\\
		    R_{k-1} + N - \bar G_{k-1}\left(k-1\right)-1\quad\forall d_k^+\geqslant k\\
                   \end{array}\right.\:,
\end{eqnarray}
and $G_k$ and $\bar G_k$ are defined as follows.
Let
\begin{equation}
 g_i\left(k\right) = \left\lbrace \begin{array}{l}
                      d_i^+ + 1\quad\forall i\leqslant k\\
		      d_i^+\quad\quad \forall i>k\\
                     \end{array} \right. \:.
\end{equation}
Then
\begin{equation}
 G_k\left(p\right) = \sum_{i=1}^N \delta_{p,g_i\left(k\right)}\:,
\end{equation}
where $\delta$ is the Kronecker delta,
and
$\bar G$ is given by the recurrence relation
\begin{eqnarray}
 \bar G_1\left(1\right) &= G_1\left(0\right) + G_1\left(1\right)\\
 \bar G_k\left(k\right) &= \bar G_{k-1}\left(k-1\right) + G_1\left(k\right) + S\left(k\right)\:,
\end{eqnarray}
where
\begin{equation}
 S\left(k\right) \equiv \sum_{t=2}^{k-1}\delta_{k,d_t^++1} - \sum_{t=2}^{k}\delta_{k,d_t^+}\:.
\end{equation}
To efficiently test the graphicality of a BDS ${\mathcal D}$, 
\begin{enumerate}
\item Sum the in- and out-degrees to determine if
$\sum_{i=1}^N d_i^- = \sum_{i=1}^N d_i^+$. If false,
then stop; ${\mathcal D}$ is not graphical. If true, continue.
While summing the degrees, also calculate $L_k$ for each $k$.
\item Compute $G_1\left(k\right)$ for each $k$.
\item Compute $S\left(k\right)$ for all $k$:
\begin{enumerate}
 \item Initialize $S\left(k\right)$ to 0 for all $k$. Set $i=2$.
 \item If $d_i^+\geqslant i$, decrease $S\left(d_i^+\right)$ by~1.
 \item If $d_i^++1> i$, increase $S\left(d_i^++1\right)$ by~1.
 \item Increase $i$ by~1. If $i\leqslant N$, repeat from step~(b).
\end{enumerate}
\item Test if $L_1 \leq R_1$. If false, then stop; ${\mathcal D}$ is not graphical. If true, set $k=2$ and continue.
\item Test if $L_k \leq R_k$. If false, then stop; ${\mathcal D}$ is not graphical. If true, increase $k$ by one and repeat. 
Continue until $k=N-1$, then stop; ${\mathcal D}$ is graphical.
\end{enumerate}

Given a graphical BDS of integer pairs 
${\mathcal D}$ in lexicographic order,
a random directed graph that realizes ${\mathcal D}$ can be constructed by
\begin{enumerate}
\item Assign in-stubs and out-stubs to each node according to its degrees.
 \item Define as current hub the lowest-index node $i$ with non-zero out-degree.
 \item Create a set of forbidden nodes $X$, which initially contains $i$
and all nodes with zero in-degree.
 \item Find the set of allowed nodes $A$ to which an out-stub of $i$ can be connected
without breaking graphicality. To find $A$, first determine the \emph{maximum fail in-degree}
$\kappa$ using the method described below. Then $A$ will consist of all nodes
$j \notin X$ that have remaining in-degree greater than $\kappa$.
  \item Choose a random node $m \in A$ and connect an out-stub of $i$ to one of its in-stubs.
 \item Reduce the value of $d_i^+$ and $d_m^-$ in ${\mathcal D}$ by 1, and reorder it accordingly.
 \item Add $m$ to the set of forbidden nodes $X$.
 \item If $i$ still has unconnected out-stubs remaining, return to step~(iv).
 \item If nodes still have unconnected out-stubs, return to step~(ii).
\end{enumerate}

The following simple procedure can be used to efficiently find the fail-in-degree in step~(iv) of the sampling algorithm.
\begin{enumerate}
 \item Create a new BDS $\mathcal{D}'$
obtained from $\mathcal{D}$ by reducing the in-degrees of
the first $d_i^+-1$ non-forbidden nodes by~1,
and reducing the out-degree of $i$ to~1.
 \item If $i=1$, set $k=2$; otherwise, set $k=1$.
 \item Compute ${L_k}$ and ${R_k}$ of the BDS $\mathcal{D}'$. 
\item If ${L_k}\neq{R_k}$: increase $k$ by 1; if $k=N$, there is no fail-in-degree,
and all the non-forbidden nodes are allowed, so stop; otherwise, go to step~(iii).
 \item Find the first non-forbidden node in $\mathcal{D}'$ whose index is greater than $k$.
 \item Identify this node in the original BDS $\mathcal{D}$. Its in-degree is the fail-in-degree. Stop.
\end{enumerate}

As in the case of the sampling algorithm for undirected graphs,
this algorithm generates directed graph samples biasedly. However,
an unbiased estimator for a network observable $Q$ for any target
distribution $P$ is the weighted average given by Eq.~\ref{unbiasedestimator}.
In this case the weights are
\begin{equation}
 w_\mu = \prod_{i=1}^\nu \prod_{j=1}^{d_i^+}\left|A_{i_j}\right|\:,
\end{equation}
where $\nu$ is the total number of hubs used, $\left|A_{i_j}\right|$
is the size of the allowed set immediately before placing the $j^\mathrm{th}$
connection coming from the $i^\mathrm{th}$ hub, and $d_i^+$ is the
out-degree of the $i^\mathrm{th}$ node chosen as a hub.
Note that, unlike the case for undirected networks,
there is no factorial combinatorial factor in the weights. This is because
while the particular sequence of hub nodes chosen depends on
the links placed, every node with non-zero out-degree
will be selected, sooner or later, as the hub.
Therefore, all the samples produced would have
an extra, identical, multiplicative factor of $\prod_{i=1}^N\frac{1}{d_i^+!}$.
As only the relative probabilities are needed for estimating an observable,
and this factor is the same for every possible sample, it
is eliminated from the formula for the weights.

\section{An explicit example} \label{apxB}

To illustrate the sampling mechanism and the difference between
weighted and unweighted estimation, we consider the realizations of
the JDM
\begin{equation}\label{Jsmall}
 J = \left(\begin{array}{ccc}
      0 & 0 & 0\\
      0 & 2 & 4\\
      0 & 4 & 1\\
     \end{array}\right)\:,
\end{equation}
and explicitly compute the average local clustering coefficients
$\la c_d\ra$ of the nodes of degree $d$, for
all values of $d$. This JDM induces the degree sequence
${\mathcal D}=\left\lbrace 2, 2, 2, 2, 3, 3\right\rbrace$, and,
up to isomorphism, has only three possible realizations, shown in
Fig.~\ref{Jreal}.
\begin{figure}
\centering
\includegraphics[width=0.33\textwidth]{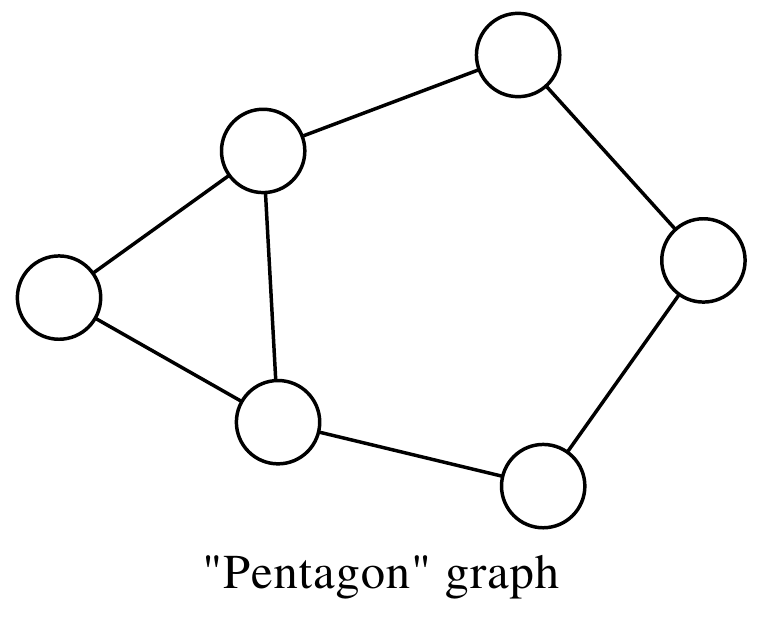}
\includegraphics[width=0.33\textwidth]{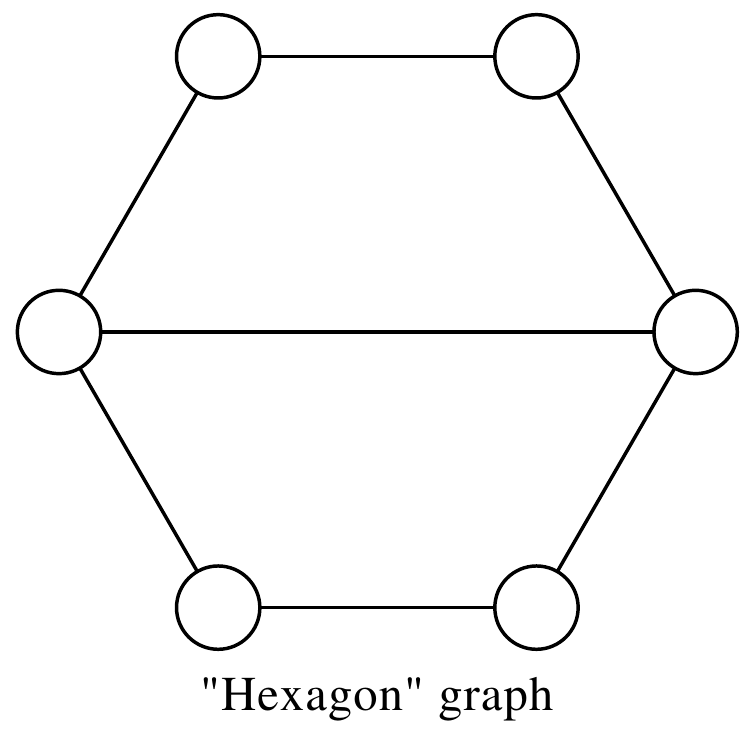}
\includegraphics[width=0.37\textwidth]{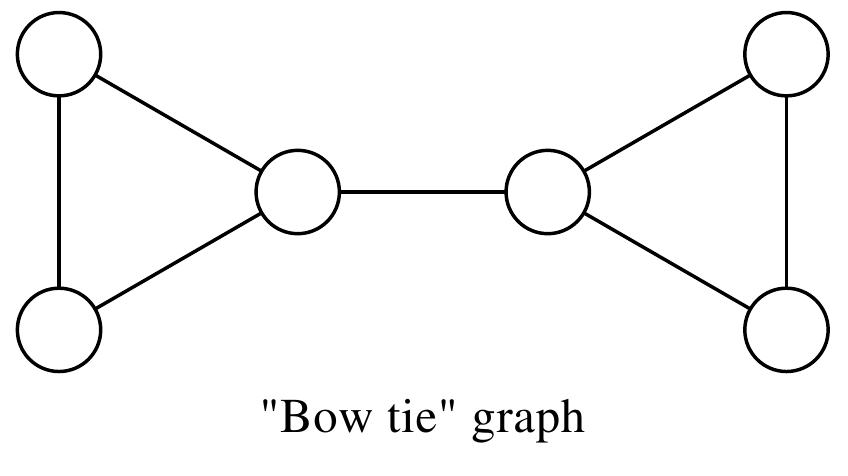}
\caption{\label{Jreal}Possible realizations of the JDM in Eq.~\ref{Jsmall},
up to isomorphism.}
\end{figure}
From the figures, it is easy to see that, for the pentagon
graphs, $\la c_2\ra_P = 1/4$ and $\la c_3\ra_P = 1/3$. Also, for
the hexagon graphs $\la c_2\ra_H = \la c_3\ra_H = 0$, while for
the bow tie graphs $\la c_2\ra_B = 1$ and $\la c_3\ra_B = 1/3$.

\subsection{Unweighted estimate}

To calculate the theoretical results for the unweighted case,
we need to consider the probability with which our algorithm
generates each degree-spectra matrix from~$J$. To this
purpose, first note that there are several degree-spectra
matrices whose realizations are all pentagon graphs. Also,
all the hexagon and bow tie graphs have the same degree-spectra
matrix
\begin{equation}\label{shb}
 S_{HB} = \left(\begin{array}{cccccc}
      0 & 0 & 0 & 0 & 0 & 0\\
      1 & 1 & 1 & 1 & 2 & 2\\
      1 & 1 & 1 & 1 & 1 & 1\\
     \end{array}\right)\:.
\end{equation}
This allows us to compute just the probability
of generating $S$, as all the other matrices will
yield the same contribution to $\la c_2\ra_P$ and
$\la c_3\ra_P$.

Our method chooses the elements of the degree-spectra
matrix $S$ being created in a systematic way, node by
node. As there are no nodes of degree~1, all the element
in the first row of the matrix are fixed to~0. Then,
the first element to choose is $S_{2,1}$, that is, the
number of edges between node~1 and nodes of degree~2.
The possible choices for this element are~0, 1, and~2.
Choosing~0 or~2 will result necessarily in a degree-spectra
matrix whose realizations are all pentagon graphs. In
fact, from Fig.~\ref{Jreal} one can see that, amongst
the realizations of~$J$, the pentagon graphs are the
only ones in which a node of degree~2, such as node~1,
has either no edges or 2 edges with nodes of degree~2.
Thus, choosing the value of $S_{2,1}$ with uniform probability,
at this stage one generates pentagon graphs with probability
$2/3$.

The remaining choice, $S_{2,1}=1$, happens with probability
$1/3$. In this case, $S_{3,1}$ is forced to be~1, since the
elements in the first column of $S$ must sum up to the degree
of the node~1, which is~2. The next element to determine is
then $S_{2_2}$. Similarly to the previous case, the possible
values are~0, 1, and 2. Choosing~0 or~2 will always result
in pentagon graphs, whose probability of being generated increases
by $1/3\cdot2/3=2/9$.

Choosing $S_{2,2}=1$, which occurs with total probability
$1/3\cdot 1/3=1/9$, forces $S_{3,2}=1$. The next value to
determine is that of $S_{2,3}$. As before choosing~0 or~2
yields pentagon graphs, whose total probability of being
generated increases by $1/3\cdot 1/3\cdot 2/3=2/27$.

The choice of $S_{2,3}=1$, which has a total probability
$1/3\cdot 1/3\cdot 1/3=1/27$ of happening, implies that
$S_{2,3}=1$. Then, the degree-spectra matrix being built
can only be $S_{HB}$. In fact, as it is evident from Fig.~\ref{Jreal},
the only graphs realizing~$J$ in which at least~3 nodes
of degree~2 are linked exactly to one other node of degree~2
and one of degree~3, are hexagon and bow tie graphs.

This shows that the degree-spectra matrix $S_{HB}$
occurs with probability $1/27$; conversely, degree-spectra
matrices yielding pentagon graphs occur with probability
$26/27$.

\begin{figure}
\centering
\includegraphics[width=0.15\textwidth]{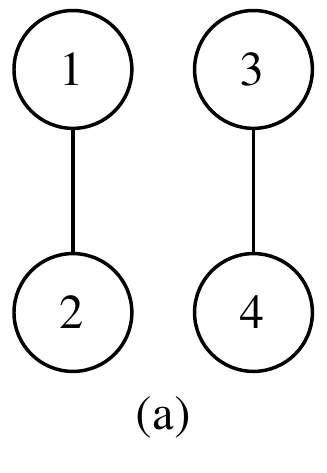}\quad\quad\quad\quad
\includegraphics[width=0.15\textwidth]{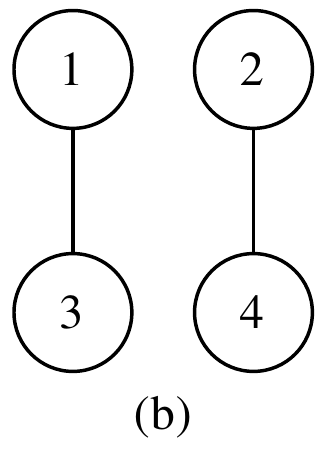}\quad\quad\quad\quad
\includegraphics[width=0.15\textwidth]{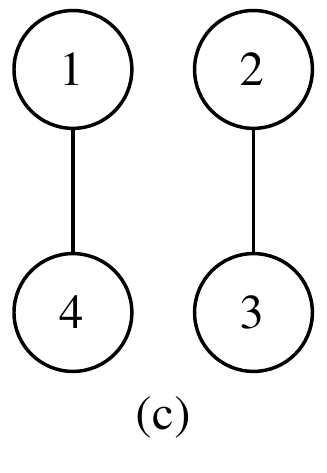}\newline
\includegraphics[width=0.33\textwidth]{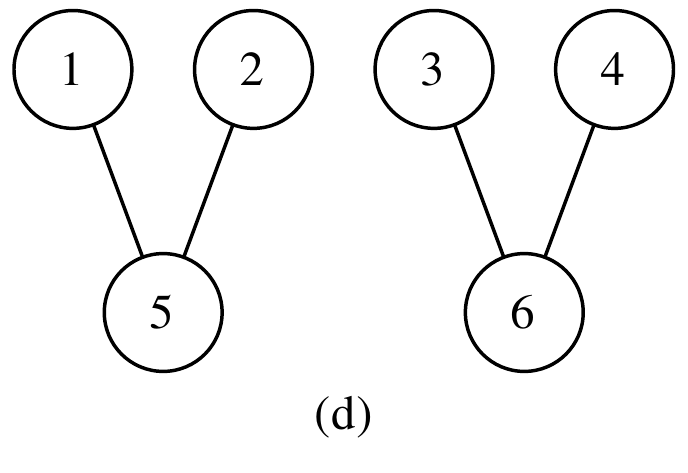}\quad\quad\quad\quad
\includegraphics[width=0.33\textwidth]{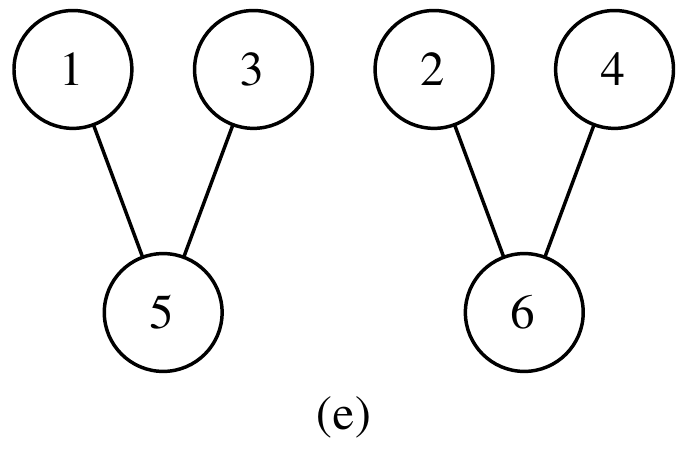}\newline
\includegraphics[width=0.33\textwidth]{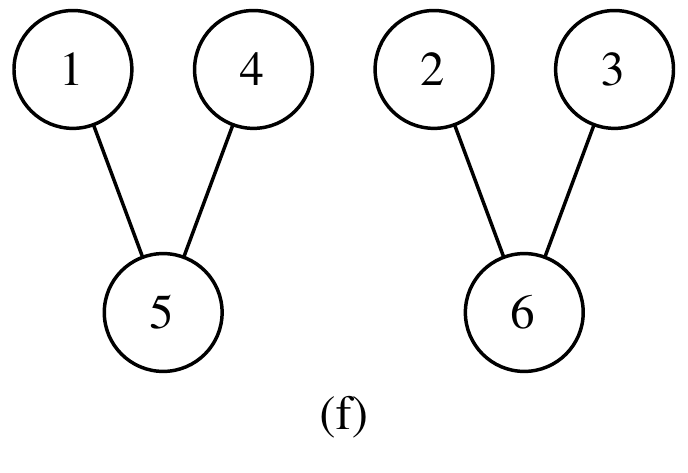}\quad\quad\quad\quad
\includegraphics[width=0.33\textwidth]{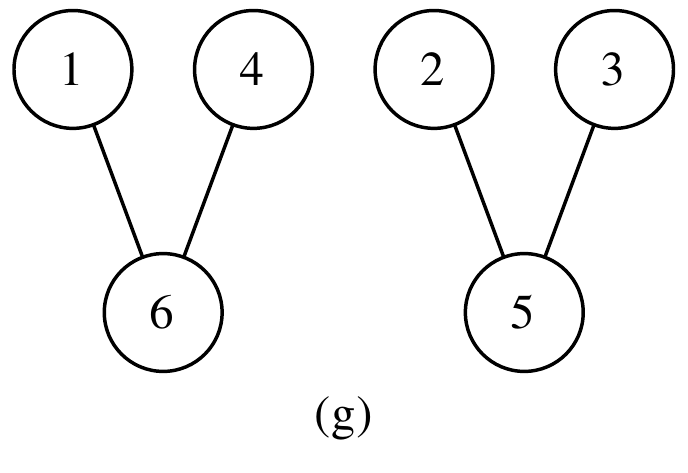}\newline
\includegraphics[width=0.33\textwidth]{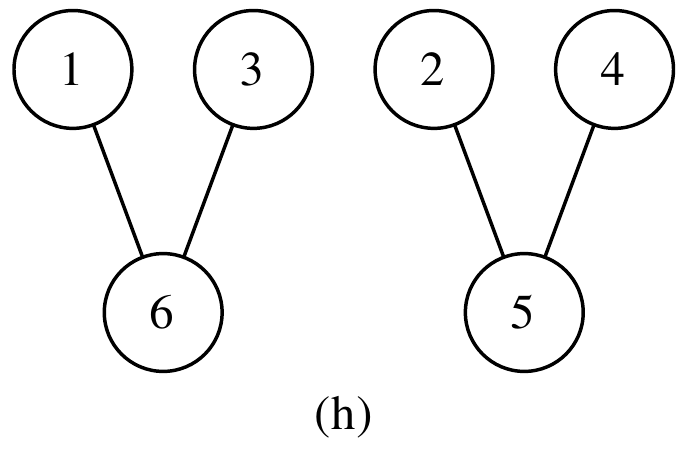}\quad\quad\quad\quad
\includegraphics[width=0.33\textwidth]{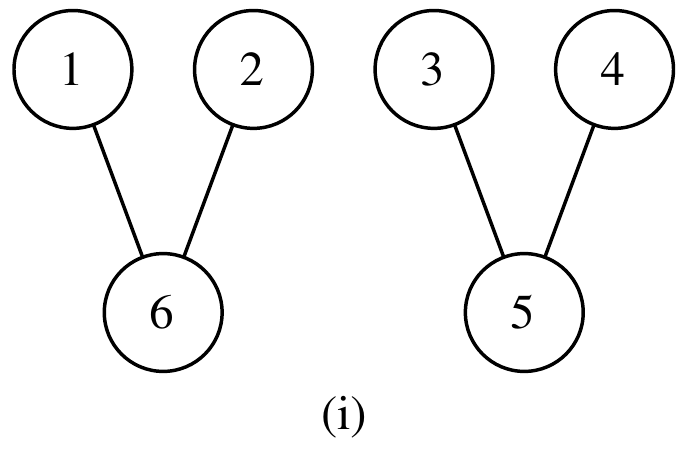}\newline
\caption{\label{type2}Degree-class subgraphs realizing the degree-spectra
matrix $S_{HB}$ of Eq.~\ref{shb}. Panels (a), (b) and (c) show the possible
realizations of $G_{2,2}$; panels (d) to (i) show the possible realizations
of $G_{2,3}$.}
\end{figure}
The next step in our evaluation is to compute the probabilities
of generating any of the hexagon and bow tie graphs from the degree-spectra
matrix $S_{HB}$. The graph-construction part of our algorithm consists
in generating all the $G_{\alpha\beta}$ subgraphs between nodes
of degree~$\alpha$ and nodes of degree~$\beta$. In the current
example, there are three such subgraphs, namely $G_{2,2}$, $G_{2,3}$,
and $G_{3,3}$. Of these, $G_{3,3}$ consists simply in a single
edge between the two nodes of degree~3. Thus, the only variability
is given by the choices for the two remaining subgraphs.

The possible realizations of $G_{2,2}$ are illustrated in panels~(a),
(b) and (c) of Fig.~\ref{type2}. Each is determined by the placement
of a single edge, which forces the choice for the remaining one. Thus,
each is produced by our algorithm with the same probability of $1/3$.
Similarly, each of the possible realizations for $G_{2,3}$, shown in
panels~(d) to~(i) of Fig.~\ref{type2}, is determined by the edges incident
to node~5 or node~6. As these are chosen by our algorithm fully randomly,
all the possible realizations occur with the same probability of $1/6$.
The particular type of graph that is produced depends on the specific
realizations of the subgraphs. As there are 3~realizations for $G_{2,2}$
and 6~for $G_{2_3}$, the total number of graphs is~18. Of these, $1/3$~are
bow tie graphs, and the remaining $2/3$ are hexagon graphs. In particular,
the bow tie graphs correspond to the subgraph choices (a,d), (a,i), (b,e),
(b,h), (c,f) and (c,g), as it is easy to see from Fig.~\ref{type2}. Note
that this indicates that, for this specific degree-spectra matrix, the
sampling is already uniform.

It is possible, now, to compute the average clustering coefficients
for the unweighted estimation. To do so, first compute their average over
the realizations of $S_{HB}$:
\begin{eqnarray}
 \la c_2\ra_{HB} &= \frac{1}{3}\cdot 1+\frac{2}{3}\cdot 0 = \frac{1}{3}\\
 \la c_3\ra_{HB} &= \frac{1}{3}\cdot\frac{1}{3}+\frac{2}{3}\cdot 0 = \frac{1}{9}\:.
\end{eqnarray}
Then, knowing that $S_{HB}$ is sampled with probability $1/27$,
and the remaining degree-spectra matrices always yield pentagon
graphs, it is
\begin{eqnarray}
 \la c_2\ra_{unweighted} &= \frac{1}{27}\cdot\frac{1}{3} + \frac{26}{27}\cdot\frac{1}{4} = \frac{41}{162}\\
 \la c_3\ra_{unweighted} &= \frac{1}{27}\cdot\frac{1}{9} + \frac{26}{27}\cdot\frac{1}{3} = \frac{79}{243}\:.
\end{eqnarray}

\subsection{Weighted estimate}
\begin{figure}
\centering
\includegraphics[width=1\textwidth]{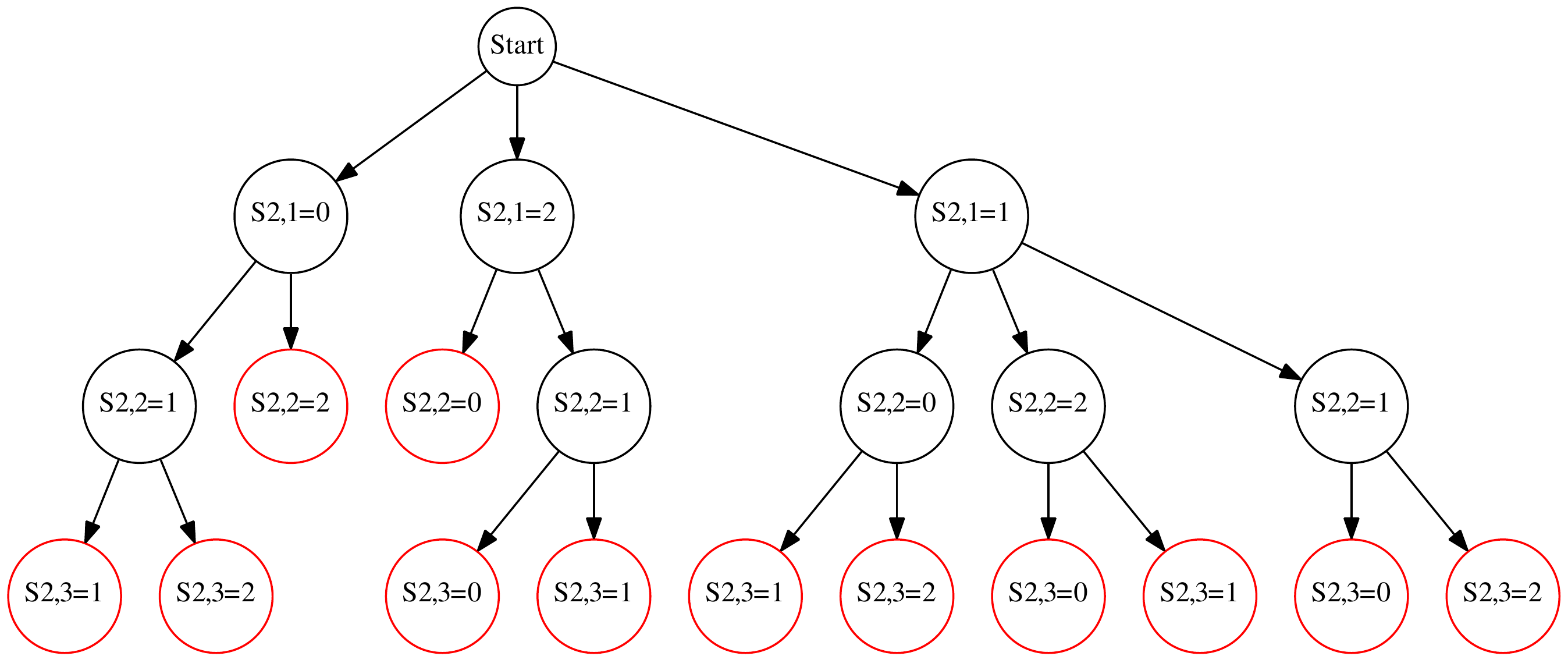}
\caption{\label{dege}Decisional tree for the construction of degree-spectra
matrices realizing $J$ and corresponding to pentagon graphs. The~12 leaves
of the tree are shown in red.}
\end{figure}
In order to obtain an analytical result for the weighted
estimate, rather than computing the probability of occurrence
of each degree-spectra matrix as sampled by our algorithm,
we need to compute their actual number. From the previous
subsection, we already know that all the hexagon and bow
tie graphs come from the same degree-spectra matrix $S_{HB}$,
which is unique. Then, we only need to compute the number
of degree-spectra matrices corresponding to pentagon graphs.

To do so, remember that the first choice in the construction
of a degree-spectra matrix from $J$ is the value of the element
$S_{2,1}$. If $S_{2,1}=0$ or $S_{2,1}=2$, then we are guaranteed
to get a pentagon graph. However, while each of these two choices
fixes the value of $S_{3,1}$, we are still free to select a
value for the next ``free'' element, $S_{2,2}$.

If $S_{2,1}=0$, the allowed values for $S_{2,2}$ are~1 and~2.
Choosing~2 fixes all the other elements of the degree-spectra
matrix. Conversely, choosing~1 results in $S_{2,3}$ still to
be determined. Its possible values are~1 and~2. Thus, there are~3
different degree-spectra matrices with $S_{2,1}=0$.

If, instead, $S_{2,1}=2$, the situation is very similar to the
first case. The possible choices for $S_{2,2}$ are~0 and~1. Choosing~0
fixes the entire matrix; choosing~1 requires to select a value
for $S_{2,3}$, which can be either~0 or~1. Thus, there are~3
matrices with $S_{2,1}=2$.

The third possibility of $S_{2,1}=1$ still allows degree-spectra
matrices corresponding to a pentagon graph. Similarly to the previous
case, the simplest way to construct one is to impose $S_{2,2}=0$
or $S_{2,2}=2$. In both cases, one must then choose a value for
$S_{2,3}$. The possibilities are~1 and~2, if $S_{2,2}=0$, or~0 and~1,
if $S_{2,2}=2$. Any choice for $S_{2,3}$ fixes all the remaining
elements of the matrix.

Finally, it is still possible to choose $S_{2,1}=1$ and $S_{2,2}=1$,
and still construct matrices corresponding to a pentagon graph. The
choice is again on $S_{2,3}$. Choosing $S_{2,3}=0$ or $S_{2,3}=2$ fixes
all the other elements of the matrix, whose realizations will be pentagon
graphs. Imposing $S_{2,3}=1$, instead results in the matrix $S_{HB}$,
exhausting all possibilities. This shows that there are~6 different
matrices with $S_{2,1}=1$ that generate pentagon graphs.

The decisional tree we just described is shown in
Fig.~\ref{dege} as a visual aid. In summary, there
are~12 different degree-spectra matrices that realize~$J$
and whose realizations are always pentagon graphs.
Knowing $\la c_2\ra_{P}$, $\la c_3\ra_{P}$, $\la c_2\ra_{HB}$
and $\la c_3\ra_{HB}$, which we computed before, we
can finally calculate the weighted average clustering
coefficients:
\begin{eqnarray}
 \la c_2\ra_{weighted} &= \frac{12}{13}\cdot\frac{1}{4} + \frac{1}{13}\cdot{1}{3} = \frac{10}{39}\\
 \la c_3\ra_{weighted} &= \frac{12}{13}\cdot\frac{1}{3} + \frac{1}{13}\cdot{1}{9} = \frac{37}{117}\:.
\end{eqnarray}

\subsection{Numerical verification}

To validate our algorithm against the analytical results
presented in the two subsections above, we performed extensive
numerical simulations, generating $10^4$ degree-spectra
matrices, and $10^4$ samples per matrix, for a total of
$10^8$ graphs. For each graph generated, we saved the average
local clustering coefficients for nodes of both degrees.
Then, we obtained both weighted and unweighted results by averaging
the data first naively, and then with a proper use of the
weights according to Eq.~\ref{degspest}. The results, shown
in Table~\ref{valid}, show that the weighted averages obtained
using our algorithm converge to the correct result. Also,
the difference between weighted and unweighted results can be
appreciated even when it is quite small, as in our example.
This illustrate the sensitivity of our method, as well as
the necessity of using proper sampling when performing
this kind of studies.
\begin{table} \label{table}
 \caption{\label{valid}Comparison between analytical and simulated
averaged local clustering coefficients.}
 \begin{tabular}{@{}lllll}
         \br
Coeff. &Theor.~unweigh. &Simul.~unweigh. &Theor.~weigh. &Simul.~weigh.\\
\mr
$c_2$ & $0.25309$ & $0.25320$ & $0.25641$ & $0.25664$\\
$c_3$ & $0.32510$ & $0.32483$ & $0.31624$ & $0.31570$\\
\br
        \end{tabular}
\end{table}

\section*{References}


\begin{thebibliography}{99}
\bibitem{New03}   Newman M E J 2003 \textit{SIAM Review} \textbf{45}, 167--256
\bibitem{Tor04}   Ben-Naim E, Fraunfelder H and Toroczkai Z 2004 \textit{Complex Networks (Lecture Notes in Physics 650)} (Springer)
\bibitem{Boc06}   Boccaletti S et al.\ 2006 \textit{Phys.\ Rep.}\ \textbf{424}, 175--308
\bibitem{Boc14}   Boccaletti S et al.\ 2014 \textit{Phys.\ Rep.}\ \textbf{544}, 1
\bibitem{Erd60}   Erdős P and Gallai T 1960 \textit{Mat.\ Lapok} \textbf{11}, 264--73
\bibitem{Hav55}   Havel V 1955 \textit{Časopis Pěst.\ Mat.}\ \textbf{80}, 477--9
\bibitem{Hak62}   Hakimi S L 1962 \textit{J.~Soc.\ Ind.\ Appl.\ Math.}\ \textbf{10}, 496--506
\bibitem{Ful60}   Fulkerson D R 1960 \textit{Pac.\ J.~Math.}\ \textbf{10}, 831--6
\bibitem{Del10}   Del~Genio C I, Kim H, Toroczkai Z and Bassler K E 2010 \textit{PLoS One} \textbf{5}, e10012
\bibitem{Kir11}   Király Z 2011 \textit{Egerv\'ary research group on combinatorial optimization} Technical report TR-2011-11 ISSN 1587--4451
\bibitem{Kim12}   Kim H, Del~Genio C I, Bassler K E and Toroczkai Z 2012 \textit{New J.~Phys.}\ \textbf{14}, 023012
\bibitem{Hel09}   Hell P and Kirkpatrick D G 2009 \textit{Discr.\ Math.}\ \textbf{309}, 5703--13
\bibitem{Tay82}   Taylor R 1982 \textit{SIAM J.~Algebra.\ Discr.}\ \textbf{3}, 114--21
\bibitem{Rao96}   Rao A R, Jana R and Bandyopadhyay S 1996 \textit{Indian J.~Stat.}\ \textbf{58}, 225--42
\bibitem{Kan99}   Kannan R, Tetali P and Vempala S 1999 \textit{Random Struct.\ Algor.}\ \textbf{14}, 293--308
\bibitem{Vig05}   Viger F and Latapy M 2005 in \textit{Lecture notes in computer science, vol.~3595 (Proceedings of the $11^\mathrm{th}$ Annual International Conference on Computing and Combinatorics)} (Berlin: Springer), 440--9
\bibitem{New01}   Newman M E J, Strogatz S H and Watts D J 2001 \textit{Phys.\ Rev.~E} \textbf{64}, 026118
\bibitem{Bog04}   Boguñá M, Pastor-Satorras R and Vespignani A 2004 \textit{Eur.\ Phys.\ J.~B} \textbf{38}, 205--9
\bibitem{Cat05}   Catanzaro M, Boguñá M and Pastor-Satorras R 2005 \textit{Phys.\ Rev.~E} \textbf{71}, 027103
\bibitem{Ser05}   Serrano M A and Boguñá M 2005 \textit{AIP Conf.\ Proc.}\ \textbf{776} (Proceedings of the International Conference on Sciences of Complex Networks), 101--7
\bibitem{Bri06}   Britton T, Deijfen M and Martin-Löf A 2006 \textit{J.~Stat.\ Phys.}\ \textbf{124}, 1377--97
\bibitem{Coo07}   Cooper C, Dyer M and Greenhill C 2007 \textit{Comb.\ Probab.\ Comput.}\ \textbf{16}, 557--93
\bibitem{Gre11}   Greenhill C 2011 \textit{Elec.\ J.\ Combin.}\ \textbf{16}(4), 557--593
\bibitem{Mik13}   Miklós I , Erdős P L and L. Soukup 2013 \textit{Elec.\ J.\ Combin.}\ \textbf{20}(1), 16
\bibitem{Erd13}   Erdős P L, Király Z and  Miklós I 2013 \textit{Comb.\ Prob.\ Comput.}\ \textbf{22}(3), 366--383
\bibitem{Erd15}   Erdős PL, Miklós I and Toroczkai Z 2015 \textit{SIAM~J.\ Discr.\ Math.}\ \textit{in press}.
\bibitem{Mol95}   Molloy M and Reed B 1995 \textit{Random Struct.\ Algorithms} \textbf{6}, 161--80
\bibitem{Mol98}   Molloy M and Reed B 1998 \textit{Combinatorics, Probab.\ Comput.}\ \textbf{7}, 295--306
\bibitem{Ben78}   Bender E and Canfield R 1978 \textit{J.\ Comb.\ Theory A} \textbf{24}, 296307
\bibitem{Bol80}   Bollobás B 1980 \textit{European J.\ Combin.}\ \textbf{1}, 311-6
\bibitem{Kle12}   Klein-Hennig H and Hartmann A K 2012 \textit{Phys.\ Rev.~E} \textbf{85}, 026101 (2012)
\bibitem{Kim09}   Kim H, Toroczkai Z, Erdős P, Miklós I and Székely L 2009 \textit{J.~Phys.~A: Math.\ Theor.}\ \textbf{42}, 392001
\bibitem{Erd10}   Erdős PL, Miklós I and Toroczkai Z 2010 \textit{Electron.\ J.~Comb.}\ \textbf{17}, R66
\bibitem{Bli10}   Blitzstein J, Diaconis P 2010 \textit{Internet Math.}\ \textbf{6}, 489--522
\bibitem{Kor13}   Molnár F, Sreenivasan S, Szymanski B K and Korniss G 2013 \textit{Sci.\ Rep.}\ \textbf{3}, 1736
\bibitem{Ves01}   Pastor-Satorras R, Vázquez A, Vespignani A 2001 \textit{Phys.\ Rev.\ Lett.}\ \textbf{87}, 258701
\bibitem{Pea895}  Pearson K 1895 \textit{P.\ R.\ Soc.\ London} \textbf{58}, 240--2
\bibitem{Yul10}   Yule G U 1910 \textit{An introduction to the theory of statistics} (London: Griffin)
\bibitem{Ken38}   Kendall M 1938 \textit{Biometrika} \textbf{30} 81--9
\bibitem{New02}   Newman M E J 2002 \textit{Phys.\ Rev.\ Lett.}\ \textbf{89}, 208701
\bibitem{Mas02}   Maslov S and Sneppen K 2002 \textit{Science} \textbf{296}, 910--3
\bibitem{New03_2} Newman M E J 2003 \textit{Phys.\ Rev.~E} \textbf{67}, 026126
\bibitem{Eub04}   Eubank S, et al. 2004 \textit{Nature} \textbf{429}, 180--4
\bibitem{Dag12}   D'Agostino G, Scala A and Caldarelli G 2012 \textit{EPL} \textbf{97}, 68006
\bibitem{Del13}   Del~Genio C I and House T 2013 \textit{Phys.\ Rev.~E} \textbf{88}, 040801(R)
\bibitem{You13}   Youssef M, Khorramzadeh Y and Eubank S 2013 \textit{Phys.\ Rev.~E} \textbf{88}, 052810
\bibitem{Wil14}   Williams O and Del~Genio C I 2014 \textit{PLoS One} \textbf{9}, e110121
\bibitem{Pat76}   Patrinos A N and Hakimi S L 1976 \textit{Discrete Math.}\ \textbf{15}, 347--58
\bibitem{Sta12}   Stanton I and Pinar A 2012 \textit{J.~Exp.\ Alg.}\ \textbf{17}, 3.5
\bibitem{Cza13}   Czabarka É, Dutle A, Erdős P and Miklós 2015 \textit{Discr. Appl. Math.} \textbf{181}, 283--288
\bibitem{Del11}   Del~Genio C I, Gross T and Bassler K E 2011 \textit{Phys.\ Rev.\ Lett.}\ \textbf{107}, 178701
\bibitem{Gjo15}   Gjoka M, Tillman B and Markopoulou A 2015 \textit{IEEE INFOCOM 15 Conference}\ \textit{to be presented}
\end{thebibliography}
\end{document}